\theoremstyle{plain}
\newtheorem{thm}{Theorem}
 \theoremstyle{definition}
  \newtheorem{example}[thm]{Example}
  \theoremstyle{definition}
  \newtheorem{defn}[thm]{Definition}
  \theoremstyle{plain}
  \newtheorem{lemma}[thm]{Lemma}
  \theoremstyle{remark}
  \newtheorem{remark}[thm]{Remark}
    \theoremstyle{remark}
  \theoremstyle{plain}
  \theoremstyle{plain}
  \newtheorem{conjecture}[thm]{Conjecture}
\title{The Descriptive Complexity \\ of Modal $\mu$ Model-checking Games}
\author{Karoliina Lehtinen
\institute{University of Kiel\\ Kiel, Germany}
\email{kleh@informatik.uni-kiel.de}
}
\begin{document}
\maketitle

\begin{abstract}
This paper revisits the well-established relationship between the modal $\mu$ calculus $L_\mu$ and parity games to show that it is even more robust than previously known. It addresses the question of whether the descriptive complexity of $L_\mu$ model-checking games, previously known to depend on the syntactic complexity of a formula, depends in fact on its semantic complexity. It shows that up to formulas of semantic complexity $\Sigma^{\mu}_2$, the descriptive complexity of their model-checking games coincides exactly with their semantic complexity. Beyond $\Sigma^{\mu}_2$, the descriptive complexity of the model-checking parity games of a formula $\Psi$ is shown to be an upper bound on the semantic complexity of $\Psi$; whether it is also a lower bound remains an open question.

 \end{abstract}

\section{Introduction}

The modal $\mu$-calculus~\cite{Kozen1983333}, written $L_\mu$, is a verification logic consisting of a simple modal logic augmented with a least fixpoint $\mu$ and its dual $\nu$. It is a prime example of the intersection between logic and games: In its model-checking games, the antagonism between the verifying player Even and her opponent Odd describes the duality between conjunctions and disjunctions, and between least and greatest fixpoints. The complexity of the winning condition -- a parity condition over a set of priorities -- corresponds to the complexity of the formula, as measured by its index, \emph{i.e.} the number of alternations between least and greatest fixpoints. 

 The index of a formula is a robust measure of a formula's complexity: for each fixed index, the model-checking problem is in {\sc P}, but no fixed index suffices to express all $L_\mu$-expressible properties~\cite{bradfield-strict}. The decidability of the \emph{semantic index} of a formula, that is to say the least index of any equivalent formula, is a long-standing open problem. Despite the efforts put towards solving it and its automata-theoretic counterpart, the Mostowski--Rabin index problem for parity automata, only the low levels of the alternation hierarchy are currently known to be decidable.
Formulas semantically in $\mathit{ML}$, the fragment without fixpoints, and those in $\Pi^{\mu}_1$ and $\Sigma^{\mu}_1$, the fragments with only one type of fixpoint, were first shown to be decidable by Otto~\cite{otto1999eliminating} and K\"usters and Wilke~\cite{kusters2002deciding} respectively. Alternative formula-focused decidability proofs for both were derived in~\cite{lehtinen2015deciding}.
Some results in automata theory~\cite{colcombet2013deciding} can be
interpreted within the $L_\mu$ context to yield another decidability result: given a formula in the $\Pi^{\mu}_2$ alternation class,
it is decidable whether it is equivalent on ranked trees to a formula in the $\Sigma^{\mu}_2$ alternation class. An alternative proof with a topological account was given in~\cite{skrzypczakdeciding}, while in~\cite{lehtinen2017accepted} a game-theoretic characterisation of $\Sigma^{\mu}_2$ extends the same result onto unranked trees: $\Sigma^{\mu}_2$ is decidable for $\Pi^{\mu}_2$.
 
 This paper revisits one of the fundamental concepts of $L_\mu$ literature in light of the index-problem: the relationship between $L_\mu$ formulas and parity games. Parity games can be used to describe the semantics of $L_\mu$ and are therefore a recurring topic in $L_\mu$ literature. This relationship is complexity-preserving, in the following sense: On the one hand, the model-checking games for
$L_\mu$ formulas of index $I$ are parity games with priorities from $I$; on the other, a $L_\mu$ formula of
index $I$ suffices to describe the winning regions of such games. As a result, the descriptive complexity of the
model-checking games of a formula $\Psi$ is bounded by the index of $\Psi$.
This paper asks whether this complexity-preserving relationship can be extended to also account for the \emph{semantic} complexity of a formula: can the bounds on the complexity of model-checking games be tightened to the semantic index of $\Psi$?

In its first part (Section \ref{first-part}), this paper argues that  if
the winning regions of the model-checking games for $\Psi$ are described by $\Phi$ -- written
$\Phi$ interprets $\Psi$ -- then $\Psi$ is semantically as simple as $\Phi$, \emph{i.e.} equivalent to a formula of the same index as $\Phi$. In other words, the descriptive complexity of the model-checking games of a formula is an upper bound on the semantic complexity of the formula. 

The second part of the paper (Section \ref{second-part}) considers the converse: if a formula is semantically of index $I$, then is it interpreted by a formula of index $I$? That is to say, do semantically simple formulas generate equally simple model-checking games? This is shown to be the case for the first semantic levels of the alternation hierarchy. In some cases, the input formula must be transformed into disjunctive form~\cite{janin1995automata} first, a normal form for $L_\mu$:
\begin{itemize}
\item If a $L_\mu$ formula is equivalent to a modal formula, it is interpreted by a modal formula;
\item If a disjunctive $L_\mu$ formula is equivalent to a $\Pi^\mu_1$ formula, it is interpreted by a $\Pi^\mu_1$ formula;
\item If a disjunctive $L_\mu$ formula is equivalent to a $\Sigma^\mu_2$ formula, it is interpreted by a $\Sigma^\mu_2$ formula;
\end{itemize}

Interestingly, these theorems are based on characterising the formulas semantically in $\mathit{ML},\Pi^{\mu}_1$ and $\Sigma^{\mu}_2$ as those for which the model-checking games coincide with variations of parity games that characterise the class. 
 For these (semantic) alternation-classes at least, the descriptive complexity of the model-checking games of a formula $\Psi$ corresponds exactly to the semantic index of $\Psi$, rather than its syntactic index.

Section \ref{sec-general-interpretation} considers the general case and shows that if a \emph{co-disjunctive} formula is equivalent to a \emph{disjunctive} formula of index $I$, then it is also interpreted by a disjunctive formula of index $I$. I conjecture that the same holds for disjunctive input formulas, but the currently available techniques do not suffice to prove this.

In short, this paper shows that the relation between $L_\mu$ and parity games can be extended in a natural way to account for the $L_\mu$ index problem: the descriptive complexity of the model-checking games of $\Psi$ is an upper bound on the semantic complexity of $\Psi$. For several non-trivial cases, and in particular beyond the alternation classes currently known to be decidable, the descriptive complexity of the model-checking games generated by $\Psi$ coincides exactly with the semantic complexity of $\Psi$. In the general case, the index of an equivalent disjunctive formula bounds the descriptive complexity of a co-disjunctive formula. Overall, the $L_\mu$--parity game relationship is even more robust than previously thought.

One practical consequence is a method for simplifying formulas by analysing their model-checking games (see Example \ref{ex}). More generally, if one can show that a fragment $F\subset L_\mu$ generates model-checking games of bounded descriptive complexity, then $F$ is embedded in a fixed level of the $L_\mu$ alternation hierarchy.  Furthermore, the study of the descriptive complexity of the model-checking games  is a novel approach to the $L_\mu$-index problem. Although the three decision procedures for $\mathit{ML}$, $\Pi^{\mu}_1$ and $\Sigma^{\mu}_2$ (for $\Pi^{\mu}_2$ formulas) each uses a completely different strategy, describing the winning regions of formulas semantically in the target class subsumes all three. Extending the interpretation theorems of this paper seems likely to lead to characterisations of higher $L_\mu$ alternation classes, providing a stepping stone towards further decidability results. In particular, a generalisation of the $\Sigma^{\mu}_2$ interpretation theorem would yield a reduction of the decidability of the index problem to a boundedness question, in the style of what Colcombet \emph{et. al.} achieved for non-deterministic automata~\cite{colcombet2008non}.

The descriptive complexity of parity games has already been considered in~\cite{dawar2008} where the authors ask which formalisms can describe the winning regions of classes of parity games without a fixed index. They show that guarded second order logic suffices in the general case while least fixed point logic does not. On finite game arenas, the winning regions are definable in least fixed point logic if and only if solving parity games is in {\sc P}. In contrast, the winning regions of the classes of parity games considered here are trivially $L_\mu$ expressible, and the focus is on exactly which fragment of $L_\mu$ is necessary.\\

The following section fixes the notation and terminology used throughout. Some familiarity with $L_\mu$ and parity games is assumed -- see for example~\cite{bradfield2007modal} for an introduction.

\section{Preliminairies}

\subsection{$L_\mu$ syntax}

Fix countably infinite sets $\mathit{Prop}=\{P,Q,...\}$ of propositional variables, and $\mathit{Var}=\{X,Y,...\}$ of fixpoint variables. For the sake of clarity and conciseness the scope of this paper is restricted to the unimodal $L_\mu$; however, the results presented here extend easily to the multi-modal case.

 \begin{defn}\emph{($L_\mu$)}
The syntax of unimodal $L_{\mu}$ is given by:
\[ \phi:= 
P \ \ | \ \ 
X \ \ | \ \ 
\neg P \ \ | \ \ 
\phi\wedge\phi \ \ | \ \ 
\phi\vee\phi \ \ | \ \ 
\Diamond \phi \ \ | \ \ 
\Box \phi \ \ | \ \ 
\mu X.\phi \ \ | \ \ 
\nu X.\phi \ \ | \ \ 
\bot \ \ | \ \ 
\top
\]
 \end{defn}
 Conjunctions take precedence over disjunctions. The scope of fixpoint bindings extends as far as possible to the right while the scope of modalities
extends as little as possible to the right. For example, $\mu X.  \Diamond X \wedge C \vee B $ is parsed as $\mu X. (((\Diamond X)\wedge C) \vee B)$. Note that here negation can only be applied to propositional
variables.



A fixpoint variable $X$ which does not appear in the scope of $\mu X.\phi$ or $\nu X.\phi$ is
said to be \emph{free}, and the set of free fixpoint variables of a formula $\psi$ is written $\mathit{Free}_\psi$. A sentence is a formula without free fixpoint variables.
A fixpoint $\mu$ or $\nu$ \emph{binds} a variable $X\in \mathit{Free}_\phi$ in $\mu X.\phi$ or $\nu X.\phi$ respectively.
The formula $\phi$ is called the binding formula of $X$, and will be written $\phi_X$.
For notational purposes, the binding formula of $X$ will be treated as an immediate subformula of $X$. The \emph{parse tree} of a sentence is the tree with the subformulas of $\Psi$ for nodes, rooted at the formula itself, and where the children of a node
consist of the parse-trees of its immediate subformulas. In other words it consists of the formula, written as a tree, with back-edges from fixpoint variables to their bindings.

A formula is \emph{guarded} if every fixpoint variable is in the scope of a modality within its
binding. Without loss of expressivity~\cite{MateescuRadu2002,kupferman2000automata}, all $L_\mu$ formulas are assumed to be in guarded form.\\

Disjunctive form is a normal form for $L_\mu$ formulas~\cite{janin1995automata} which imposes some additional structure onto the $L_\mu$ syntax. The dual definition yields co-disjunctive formulas. Every $L_\mu$ can be turned into a disjunctive formula, although the transformation may not preserve complexity~\cite{lehtinendisjunctive}. Unlike in general $L_\mu$ model-checking games, in the model-checking games of disjunctive formulas, the verifying player has strategies which only agree with one play per branch~\cite{lehtinen2015deciding}.

\subsection{Alternations}

The definition of alternation depth, here referred to as index to match the automata-theoretic terminology,
is meant to only capture alternations which generate algorithmic complexity, matching the definition given in Niwi\'nski 1986~\cite{niwinski1986fixed}.
The presentation here emphasizes the relationship of a formula's index to the priorities in parity games.
A thorough discussion on how to define alternation depth, and comparison of definitions used in the literature
can be found in Bradfield and Stirling 2007~\cite{bradfield2007modal}.

\begin{defn}\emph{(Priority assignment and index)}
A priority assignment for a sentence $\Psi$ is a mapping $\Omega: \mathit{Var}_\Psi \rightarrow \{m,...,q-1,q\}$, where $m\in \{0,1\}$ and $q$ is a positive integer, of priorities to
the fixpoint variables $\mathit{Var}_\Psi$ of $\Psi$ such that $\mu$-bound variables receive odd priorities while $\nu$-bound variables
receive even priorities.
A priority assignment is order preserving if whenever $X$ is free in the formula binding $Y$, $\Omega(X)\geq \Omega(Y)$ holds.

A formula has index $I$ if it has an order preserving priority assignment with co-domain $I$.
 A formula has semantic index $I$ if it is equivalent to a formula of index $I$.
\end{defn}

\begin{defn}\emph{(Alternation hierarchy)}
 The base of the alternation hierarchy is $\mathit{ML}$, the modal fragment of $L_\mu$, consisting of formulas
 without any fixpoints.
 Formulas with only $\nu$-bound fixpoints, or only $\mu$-bound fixpoints respectively, have index $\{0\}$, or $\{1\}$,
 corresponding to the alternation classes $\Pi^{\mu}_1$, or $\Sigma^{\mu}_1$. 
 The classes $\Pi^{\mu}_i$ and $\Sigma^{\mu}_i$ for positive even
 $i$ correspond to formulas with indices $\{1,...,i\}$ and $\{0,...,i-1\}$ respectively, while for odd $i$ they correspond to
 formulas with indices $\{0,...,i-1\}$ and $\{1,...,i\}$ respectively.

\end{defn}

\begin{example}
The formula $\mu X. \nu Y. (\Box Y \wedge \mu Z. \Box (X \vee Z))$ accepts the order-preserving assignment $\Omega(X)=1,\Omega(Y)=0$ and $\Omega(Z)=1$, and has therefore index $\{0,1\}$ and is in $\Sigma^{\mu}_2$. It is however equivalent to $\mu X. \Box X$ (see Section \ref{semantics}) and is therefore semantically in $\Sigma^{\mu}_1$.
\end{example}

\begin{thm}\cite{bradfield-strict,lenzi-strict}
 The alternation hierarchy is strict: for each index $I=\{m,...,q\}$, where $m\in \{0,1\}$, there are formulas that
 are not equivalent to a formula with smaller index.
\end{thm}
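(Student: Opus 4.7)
The plan is to exhibit, for each index $I = \{m, \ldots, q\}$, a canonical ``hard'' formula $\Psi_I$ of index $I$ and to show by diagonalisation that no formula of strictly smaller index is equivalent to $\Psi_I$. A natural witness is the formula that, on trees whose nodes are labelled by propositions $P_m,\ldots,P_q$ encoding priorities, expresses the Even-winning condition of the parity game induced by the labelling. Concretely, take
\[
\Psi_I = \sigma_q X_q . \sigma_{q-1} X_{q-1} . \cdots . \sigma_m X_m . \bigvee_{i \in I}(P_i \wedge \Box X_i),
\]
with $\sigma_i = \nu$ for even $i$ and $\sigma_i = \mu$ for odd $i$. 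The identity on $\{m,\ldots,q\}$ is an order-preserving priority assignment, so $\Psi_I$ has index $I$, and semantically it defines precisely those trees from which Even wins the parity game read off the labelling.

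For the non-expressibility half, I would reduce to the strictness of the Mostowski--Rabin index hierarchy for parity tree automata. Every $L_\mu$ formula of index $J$ compiles into an alternating parity tree automaton with priorities in $J$, and via the standard Muller--Schupp-style simulation into a non-deterministic parity tree automaton of the same index recognising the same language. Therefore, if $\Psi_I$ were equivalent to a formula of index $J \subsetneq I$, the class of priority-labelled trees witnessing Even's victory in the $I$-parity game would be recognised by a parity tree automaton of index $J$, contradicting the known strictness of the automata hierarchy at this level. Alternatively, one may proceed \`a la Bradfield by lifting the $L_\mu$ alternation hierarchy into modal $\mu$-arithmetic, whose levels coincide with those of the arithmetic hierarchy and are therefore strict by Kleene's theorem.

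The main obstacle is the underlying diagonal argument itself, wherever one chooses to locate it. Given any candidate formula $\Phi$ of index $J \subsetneq I$, one must produce a priority-labelled tree on which $\Phi$ and $\Psi_I$ disagree. This requires enumerating the patterns of memoryless winning strategies available to a parity automaton of index $J$ and constructing a branch on which priorities are ``flipped'' so as to satisfy the $I$-parity condition while falsifying the $J$-parity one. Tracking the index faithfully through the formula-to-automaton translations, and verifying that the diagonalised tree genuinely separates the two classes rather than accidentally satisfying both, is where the real technical weight of the result resides.
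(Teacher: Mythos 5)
This theorem is not proved in the paper at all: it is imported as a known result of Bradfield and Lenzi, so there is no in-paper argument to compare yours against. Judged on its own terms, your proposal correctly identifies the two classical routes (Niwi\'nski/Arnold-style diagonalisation against automata, or Bradfield's transfer of the arithmetic hierarchy through modal $\mu$-arithmetic) and a reasonable family of witnesses, namely the parity-condition formulas, which the paper itself notes are complete for their index. But as written it has two genuine gaps. First, the reduction step ``every $L_\mu$ formula of index $J$ compiles into an alternating parity tree automaton with priorities in $J$, and via the standard Muller--Schupp-style simulation into a non-deterministic parity tree automaton of the same index'' is false: the simulation of alternating by non-deterministic parity tree automata does \emph{not} preserve the index (the alternating and non-deterministic index hierarchies are genuinely different; the paper makes the analogous point that the translation to disjunctive form need not preserve complexity). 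So you cannot discharge the lower bound by appealing to the strictness of the \emph{non-deterministic} hierarchy; the statement you need is strictness of the \emph{alternating} hierarchy, which is essentially the theorem itself. Second, even setting that aside, your last paragraph explicitly defers the diagonalisation --- producing, for each candidate $\Phi$ of index $J\subsetneq I$, a labelled tree on which $\Phi$ and $\Psi_I$ disagree --- and that diagonalisation is the entire mathematical content of the result. A plan that names the hard step without executing it is not a proof.

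A smaller point: your witness $\sigma_q X_q.\cdots\sigma_m X_m.\bigvee_{i\in I}(P_i\wedge\Box X_i)$ uses only $\Box$, so it expresses a one-player (universal) parity condition rather than the two-player game condition; the standard hard formulas, like the paper's $\mathsf{Parity}_I$ with clauses $(E_i\wedge\Diamond X_i)\vee(O_i\wedge\Box X_i)$, encode genuine games, and the completeness proofs in the literature rely on that. If you want a self-contained argument you should either use those game formulas and run Arnold's fixed-point diagonalisation, or follow Bradfield's $\mu$-arithmetic route in detail.
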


Deciding an index $I$ means deciding, given an arbitrary $L_\mu$ formula, whether it is equivalent to any formula of index $I$.

\subsection{$L_\mu$ semantics and parity games}\label{semantics}

We define the semantics of $L_\mu$ in terms of its model-checking games, known as parity games. The correspondence with the classical semantics is standard~\cite{bradfield2007modal}.

$L_\mu$ formulas operate on regular unranked trees.For unimodal $L_\mu$, these trees do not have edge-labels.

\begin{defn}\emph{(Trees)}
 An unranked tree $\mathfrak{T}=(S,E,r,P)$, rooted at $r\in S$ consists of:
 a set of states $S$,
 a successor relation $E\subseteq S\times S$, and
 a labelling $P:S \rightarrow \mathcal{P}(\mathit{Prop}_{\mathfrak{T}})$ mapping states to subsets of a finite set $\mathit{Prop}_{\mathfrak{T}}\subset \mathit{Prop}$.
For each state $s\in S$, the set  $\{w\in S\mid \exists  w_1,\ldots,w_k.\;(w,w_1)\in E,(w_1,w_2)\in E,\ldots (w_k,s)\in E\}$ of its ancestors, is finite and
well-ordered with respect to the transitive closure of $E$.
\end{defn}
The scope of this paper is restricted to regular trees with finite but unbounded branching, which can be finitely represented as trees with back edges. Since $L_\mu$ enjoys a finite model property~\cite{kozen1988finite}, the index problem on trees with infinite branching reduces to the index problem on trees with finite branching.

\begin{defn}\emph{(Parity game)}
A parity game arena is $A=(V,E,v_\iota,\Omega)$ consisting of: a set $V$ of states partitioned into those belonging to Even, $V_e$ and those belonging to Odd, $V_o$;
an edge relation $E\subseteq V\times V$; an initial node $v_\iota$; and a priority assignment $\Omega: V \rightarrow I$.
The co-domain $I$ of $\Omega$ is said to be the index of $A$. 
$I$ is a prefix of the natural numbers, starting at either $0$ or $1$.

A play in a parity game is a potentially infinite sequence of successive positions starting with $v_\iota$. A play is finite if it ends in a position without successors; then the owner of the position loses. For infinite plays, the winner depends on the highest priority seen infinitely often, called the dominant priority. The winner is the player of the
parity of the dominant priority.

A positional (or memoryless) strategy $\sigma$ for Even (and similarly for Odd) consists of a choice $\sigma(v)$ of successor at the nodes $v$ in $V_e$ (or $V_o$).
A play $\pi=v_0,v_1,...$, where $v_0=v_\iota$, agrees with a strategy $\sigma$
if, at every position $v_i\in V_e$ (or $V_o$) along the play, $v_{i+1}=\sigma(v_i)$.

A pair of strategies, one for each player, induces a unique play.
A strategy for a player is winning if every play that agrees with it is winning for this player.
Parity games are positionally determined~\cite{emerson1991tree,martin1975borel}: positional strategies suffice and exactly one of the players has a winning strategy
from every position.
A parity game $G(A)$ on arena $A$ is said to be winning for the player with a winnning strategy.
\end{defn}

We define the semantics of $L_\mu$ in terms of winning strategies in parity games:
For an unranked tree $\mathfrak{T}$ and formula $\Psi$, we say that $\mathfrak{T}$ satisfies $\Psi$, written $\mathfrak{T}\models \Psi$ if and only if Even has a winning strategy in the model-checking parity game $G(\mathfrak{T}\times \Psi)$ defined below. 

\begin{defn}\emph{(Model-checking parity game)}
 Let $\Psi$ be a sentence of $L_\mu$ and $\Omega_{\Psi}$ a priority assignment with co-domain $I$ on the fixpoint variables of $\Psi$.
 Then, for any unranked tree $\mathfrak{T}$, define the model-checking parity game arena $\mathfrak{T}\times \Psi$ as follows:
 $\mathfrak{T}\times \Psi = (V,E,v_\iota,\Omega)$ where:
 \begin{itemize}
  \item $V$ is the set of states $(s, \phi)$, where $s$ is a state of $\mathfrak{T}$ and $\phi$ is a subformula of $\Psi$.
  \item $V_o$ consists of positions $(s, \phi \wedge \psi)$ and $(s, \Box \phi)$ while $V_e=V\setminus V_o$;
  \item Positions $(s, C)$, for $C$ a literal, are terminal and if $C\in P(s)$, belong to Odd; else it belongs to Even;                                                                                                                                                                                                                                                                                                                                                                                                                                                                                                                                                                                                       
  \item There is an edge from $(s, \phi \vee \psi)$ to $(s, \phi)$ and $(s,\psi)$;
  
  an edge from $(s,\phi \wedge \psi)$ to $(s, \phi)$ and $(s,\psi)$;
  
  an edge from $(s, X)$, $(s, \mu X.\phi_X)$ and $(s, \nu X.\phi_X)$ to $(s, \phi_X)$;
  
  an edge from $(s, \Diamond \phi)$ and $(s, \Box \phi)$ to $(s' , \phi)$ for every successor $s'$ of $s$;

  \item $v_\iota$ the initial position is $(r, \Psi)$, where $r$ is the root of $\mathfrak{T}$;
  \item $\Omega$ assigns $\Omega_\Psi(X)$ to positions $s\times X$; $\Omega$ assigns the minimal priority of $I$ to all other positions.
 \end{itemize}

\end{defn}

\begin{defn}\emph{(Semantics of $L_\mu$ via games)}
If $\Omega_{\Psi}$ is an order-preserving priority assignment for $\Psi$, then for all $\mathfrak{T}$, Even has a winning strategy in the parity game $G(\mathfrak{T}\times \Psi)$ if and only if $\mathfrak{T}\models \Psi$.
Furthermore,
Even has a winning strategy in $G(\mathfrak{T}\times \Psi)$ from each position
$(s, \phi)$ of $\mathfrak{T}\times \Psi$, where $s$ is a state of $\mathfrak{T}$ and $\phi$ is a subformula of $\Psi$, if and only
if $s \models \phi$.
\end{defn}

\begin{defn} 
The model-checking parity games generated by a formula $\Psi$ are encoded as trees by assigning propositional variables $E_i$ to positions belonging to Even with priority $i$ and $O_i$ to positions belonging to Odd of priority $i$. Encoding more data about the provenance of positions will be useful in the second half of this paper: we will mark modal positions, that is to say positions $s \times \Box \phi$ and $s\times \Diamond \phi$ with a propositional variable $M$.

The winning regions of parity games can be described by $L_\mu$ formulas~\cite{emerson1991tree}, shown to be complete for their index (\emph{i.e.} not equivalent to any formula of lower index)\cite{bradfield1998simplifying}.\\

For an index $I=\{m,...,q\}$ with $m\in\{0,1\}$, define the formula:
$$\mathsf{Parity}_I = \gamma_q X_q. ... \gamma_m X_m. \bigvee_{i\in I} (E_i \wedge \Diamond X_i) \vee (O_i \wedge \Box X_i)$$

where $\gamma_i$ is $\mu$ for odd $i$ and $\nu$ for even $i$.

\end{defn}

\begin{defn}
$\Phi$ \emph{interpret} $\Psi$ if for all trees $\mathfrak{T}$, it is the case that
$\mathfrak{T}\models \Psi$ if and only if $\mathfrak{T}\times \Psi\models \Phi$.
\end{defn}
\begin{thm}\cite{emerson1991tree,walukiewicz2002monadic}
For all $\Psi$ of index $I$, $\mathsf{Parity}_I$ interprets $\Psi$.
\end{thm}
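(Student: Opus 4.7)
The plan is to reduce the statement to the classical fact, implicit in the cited works of Emerson--Jutla and Walukiewicz, that $\mathsf{Parity}_I$ defines the winning region of Even in every parity game of index $I$, once the game arena is encoded as a labelled tree with atoms $E_i, O_i$ as in the excerpt. Given this fact, applying it to the arena $A = \mathfrak{T}\times\Psi$, which is itself a parity game of index $I$ by construction of the model-checking game, yields immediately that $A \models \mathsf{Parity}_I$ if and only if Even has a winning strategy in $G(A)$ from $v_\iota$, which by the game-theoretic semantics of $L_\mu$ is equivalent to $\mathfrak{T}\models\Psi$.

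To establish the underlying fact, I would analyse the model-checking game $G(A \times \mathsf{Parity}_I)$ directly. At a position $(v, B)$, where $B$ denotes the body $\bigvee_{i\in I}(E_i \wedge \Diamond X_i)\vee(O_i \wedge \Box X_i)$ and $v$ has priority $i$ in $A$, Even is at a disjunction and must choose a disjunct. Exactly one atom among $\{E_j, O_j\}_{j\in I}$ is true at $v$, so every disjunct except the one matching the label of $v$ leads to a conjunction whose literal conjunct is false; Odd would then move to the false literal and Even would lose. Thus the only non-losing move for Even is to pick $E_i\wedge\Diamond X_i$ if $v\in V_e$ and $O_i\wedge\Box X_i$ if $v\in V_o$. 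Symmetrically, at the subsequent conjunction, Odd cannot afford to choose the (true) literal conjunct, so play proceeds to the modality. From $(v,\Diamond X_i)$ Even selects a successor $v'$ of $v$ in $A$; from $(v,\Box X_i)$ Odd does so. In either case the play reaches $(v', X_i)$ and then unfolds down the fixpoint chain back to $(v', B)$.

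This simulation shows that each macro-step of $G(A\times\mathsf{Parity}_I)$ between two body positions corresponds bijectively to one move in $G(A)$, that the ownership of the macro-choice matches the ownership of the originating position in $A$, and that the only non-minimum priority visited during the macro-step is $i$, at the position $(v', X_i)$, where $i$ is exactly the priority of $v$ in $A$. Consequently the priority sequence of an infinite meta-play equals, up to a shift by one, the priority sequence of the corresponding play in $G(A)$, and since $\gamma_i$ is $\mu$ for odd $i$ and $\nu$ for even $i$, the parity winning condition of $\mathsf{Parity}_I$ coincides with the parity condition of $A$. A positional strategy for either player in $G(A)$ then lifts in the obvious way to a positional strategy in the meta-game, and vice versa.

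The main technical care is needed at terminal positions $(s, C)$ of $\mathfrak{T}\times\Psi$: such a position has no successors, yet must be assigned a label $E_i$ or $O_i$ according to its owner. If it is Even's (so $C \notin P(s)$, and Even loses), it carries an $E_i$ label but no successors, so in the meta-game $\Diamond X_i$ is immediately false and Even indeed loses; dually, an Odd-owned terminal carries an $O_i$ label and makes $\Box X_i$ vacuously true, matching the fact that Odd loses in $G(A)$. Once this bookkeeping is spelled out, the simulation argument closes, and the theorem follows by instantiating with $A = \mathfrak{T}\times\Psi$.
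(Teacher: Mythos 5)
The paper does not actually prove this statement --- it is quoted as a classical result and attributed to Emerson--Jutla and Walukiewicz --- so there is no in-paper proof to compare against. Judged on its own terms, your argument is the standard and correct one: the bijective macro-step simulation between $G(A\times\mathsf{Parity}_I)$ and $G(A)$, with the observation that the only non-minimal priority seen in a macro-step is $i$ at the position $(v',X_i)$ (where $i$ is the priority of the source vertex $v$, so the dominant priorities agree up to an irrelevant one-step shift), together with the forced nature of the disjunct/conjunct choices and the bookkeeping at terminal positions, is exactly how one establishes that $\mathsf{Parity}_I$ defines Even's winning region on arenas of index $I$. One small point worth making explicit if you write this up fully: the unfolding from $(v',X_i)$ back to the body passes through the intermediate binder positions $(v',\gamma_{i-1}X_{i-1}\ldots)$, which by the paper's convention all carry the minimal priority, and the assignment $\Omega(X_i)=i$ is order-preserving for $\mathsf{Parity}_I$ because each $X_j$ with $j>i$ is free in the binding formula of $X_i$; both facts are needed for the priority-matching claim and both hold. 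With that spelled out, instantiating $A=\mathfrak{T}\times\Psi$ and invoking the game semantics of $L_\mu$ closes the proof as you describe.
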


\section{The descriptive complexity of parity games}\label{first-part}

The last section recalled the classic result that
the syntactic complexity of $\Psi$ is an upper bound on the descriptive complexity of its model-checking game.
Here we show that  if a formula of index $I$ interprets $\Psi$,
then the formula $\Psi$ is equivalent to one of index $I$. In other words,
the semantic complexity of a $L_\mu$ formula is a lower bound to the descriptive complexity of its model-checking games.

\begin{thm}\label{converse}
 Let $\Psi$ be a formula of $L_\mu$. If for some formula $\mathsf{Win}$ and all structures $\mathfrak{T}$ we have $\mathfrak{T}\times\Psi\models\mathsf{Win}$ if and only if
$\mathfrak{T}\models\Psi$, that is to say $\mathsf{Win}$ interprets $\Psi$,
then $\Psi$ is equivalent to a formula which has the same alternation depth as $\mathsf{Win}$.
\end{thm}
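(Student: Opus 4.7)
The plan is to construct, from $\mathsf{Win}$ and $\Psi$, a sentence $\Psi'$ over the signature of $\mathfrak{T}$ that is equivalent to $\Psi$ and whose index equals the index of $\mathsf{Win}$. The key observation is that, because $\Psi$ is a fixed finite formula, the arena $\mathfrak{T}\times\Psi$ arises from $\mathfrak{T}$ by a uniform local expansion: each position $(s,\phi)$ is indexed by a state $s$ of $\mathfrak{T}$ and a subformula $\phi$ of $\Psi$, and the Even/Odd labelling, the priority, the modal marker $M$, and the successors of $(s,\phi)$ are determined by $\phi$ alone, with the single exception that the label of a literal position $(s,C)$ depends on whether $C$ holds at $s$. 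I would translate the $\mathsf{Win}$-game on $\mathfrak{T}\times\Psi$ into a parity game directly over $\mathfrak{T}$, whose priorities are entirely inherited from $\mathsf{Win}$, and describe its winning region by an $L_\mu$ sentence.

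Concretely, for every pair $(\rho,\phi)$ with $\rho$ a subformula of $\mathsf{Win}$ and $\phi$ a subformula of $\Psi$, I would define a translation $[\rho]_\phi$ meant to hold at a state $s$ of $\mathfrak{T}$ exactly when Even wins $G(\mathsf{Win},\mathfrak{T}\times\Psi)$ from the position $((s,\phi),\rho)$. The recursion is on $\rho$: Boolean connectives translate homomorphically; each atom $E_i$, $O_i$, $M$ becomes $\top$, $\bot$, or (for literal $\phi$) a propositional literal of $\mathfrak{T}$; a diamond $\Diamond\rho'$ becomes, according to the shape of $\phi$, a disjunction $[\rho']_{\phi_1}\vee[\rho']_{\phi_2}$ when $\phi$ is Boolean, a renaming $[\rho']_{\phi_X}$ when $\phi$ is a fixpoint or a fixpoint variable, a genuine modality $\Diamond [\rho']_{\psi}$ when $\phi$ is itself modal, and $\bot$ when $\phi$ is a literal; the case $\Box\rho'$ is handled dually. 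Finally, a fixpoint $\gamma Y.\rho_Y$ of $\mathsf{Win}$ spawns, for each $\phi$, a fresh variable $Y_\phi$ of parity $\gamma$, yielding a mutually recursive system $\{Y_\phi = [\rho_Y]_\phi\}_\phi$. The candidate sentence $\Psi'$ is $[\mathsf{Win}]_\Psi$, with the generated system collapsed into a single formula via Beki\'c's simultaneous-to-nested-fixpoint translation; correctness is then a position-by-position correspondence between $G(\Psi',\mathfrak{T})$ and $G(\mathsf{Win},\mathfrak{T}\times\Psi)$, combined with the interpretation hypothesis.

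The main obstacle is verifying that this construction preserves the index. Beki\'c's principle preserves the parity of individual fixpoints but does not in general preserve alternation depth. The argument should succeed here because all variables $Y_\phi$ born of a single $Y$ of $\mathsf{Win}$ share $Y$'s parity and intended priority, and the dependency graph among the $Y_\phi$'s mirrors that of $\mathsf{Win}$'s own variables, so nesting them in the order inherited from $\mathsf{Win}$ should produce an index-preserving translation. This needs to be spelled out carefully, together with a guarding step handling the renaming cases $[\rho']_{\phi_X}$, which can introduce unguarded occurrences; the guardedness of both $\mathsf{Win}$ and $\Psi$ should ensure that guardedness can be restored without inflating the index.
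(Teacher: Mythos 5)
Your proposal is correct and is essentially the paper's own proof: your translation $[\rho]_\phi$ is exactly the paper's product $\phi\times\rho$ (Definition \ref{product}), with the same case analysis on the shapes of the two subformulas, the same position-by-position correspondence between $G(\mathfrak{T}\times(\Psi\times\mathsf{Win}))$ and $G((\mathfrak{T}\times\Psi)\times\mathsf{Win})$, and priorities inherited wholesale from $\mathsf{Win}$. The only difference is bookkeeping: where you invoke Beki\'c on the simultaneous system $\{Y_\phi\}$, the paper introduces the fresh variables $W_{\phi\times X}$ lazily, re-binding a variable of $\mathsf{Win}$ each time it is met in a new $\phi$-context unless it can safely be reused, with a side condition that directly guarantees the resulting priority assignment is order-preserving --- which resolves the index-preservation worry you correctly flag.
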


The proof depends on a product-like operation on formulas which, if $\mathsf{Win}$ interprets $\Psi$,
yields a formula $\Psi\times \mathsf{Win}$, equivalent
to $\Psi$, with the alternation depth of $\mathsf{Win}$. 
The formula $\Psi\times\mathsf{Win}$ is built with the intention that the parity game  arena $(\mathfrak{T}\times\Psi)\times\mathsf{Win}$ is the same as
$\mathfrak{T}\times(\Psi\times\mathsf{Win})$.
The choice of overloading the notation $\times$ is meant to emphasize this associativity. Note however
the type of the objects in these statements: $\Psi \times \mathsf{Win}$ is a formula if $\Psi$ and $\mathsf{Win}$ are both formulas while $\mathfrak{T}\times \Psi$ is a
tree if $\mathfrak{T}$ is a tree.
 

\begin{defn} \emph{($\Psi\times\mathsf{Win}$)}\label{product}
Let $\Psi$ be a formula of $L_\mu$ with index $J$ and priority assignment $\Omega_\Psi$, and let $\mathsf{Win}$ be a formula over propositional variables $\mathcal{P}_{\mathsf{Win}}$ with priority assignment $\Omega_{\mathsf{Win}}$ with co-domain $I$ with minimal element $m$.
Define $\Psi\times\mathsf{Win}$, using a fresh set of fixpoint variables $W_{\phi \times X}$ where $\phi$ ranges over subformulas of $\Psi$ and $X$ ranges over the fixpoint variables of $\mathsf{Win}$, as follows:
\begin{align*}
P \times E_m                  = \neg P &\text { and }
 P \times O_m                 = P \text{ where } P \text { is a literal}; \\
 \phi \times P = \top \text{ if for each state } s &\text{ of any tree, the position }(s,\phi) \text{ satisfies } P \in \mathcal{P}_{\mathsf{Win}}. \\ \text{In particular:} \\
X \times E_{\Omega_\Psi(X)}  &=\phi \wedge \psi \times O_m = \phi \vee \psi \times E_m   = \top; \\
\Diamond \phi \times E_m  &= \Box \phi \times O_m  = \top \times O_m = \bot \times E_m = \top; \\
\phi \times P               &= \bot \text{ for }P\in \mathcal{P}_{\mathsf{Win}} \text{ otherwise.}\\
\Box\phi \times \Box \psi = \Diamond \phi \times \Box \psi  &= \Box(\phi \times \psi);\\
\phi \times \Box \psi       = \bigwedge_{\phi'\in \mathit{im}(\phi)} &\phi'\times \psi \text{ where } \mathit{im}(\phi) \text{ is the set of immediate subformulas of } \phi;\\
\Box\phi \times \Diamond \psi =  \Diamond \phi \times \Diamond \psi &= \Diamond(\phi \times \psi);\\
\phi \times \Diamond \psi   = \bigvee_{\phi'\in\mathit{im}(\phi)} &\phi' \times \psi \text{ where } \mathit{im}(\phi) \text{ is the set of immediate subformulas of } \phi;\\
\text{ If reached } \text{computing a subformula of }&\mu W_{\phi\times X}. (\phi \times \psi_{X})\text{ (or } \nu W_{\phi\times X}. (\phi \times \psi_{X}) \text{ )} \\
\text{but not a subformula } \nu W_{\phi'\times Z}.\phi_{W_{\phi'\times Z}} \text{ (or } & \mu W_{\phi'\times Z}.\phi_{W_{\phi'\times Z}} \text{) thereof with } \Omega(Z)>\Omega(X) \text{ then:}\\
\phi \times X               &= W_{\phi\times X} \text{, else:}\\
                            &= \mu W_{\phi\times X } .(\phi \times \phi_{X}) \text{ if } X \text{ is a }\mu \text{ variable};\\
                            &= \nu W_{\phi\times X}. (\phi \times \phi_{X}) \text{ if } X \text{ is a } \nu \text{ variable.} \\
\phi \times \mu X. \psi_X   &= \mu W_{\phi\times X}. \phi \times \phi_X;\\
\phi \times \nu X. \psi_X   &= \nu W_{\phi\times X}. \phi \times \phi_X;\\
\phi \times \psi \wedge \psi' &= (\phi \times \psi)\wedge (\phi \times \psi');\\
\phi \times \psi \vee \psi' &= (\phi \times \psi)\vee (\phi \times \psi').
\end{align*}
where
$\Omega_{\Psi\times\mathsf{Win}}(W_{X \times \phi})=\Omega_{\mathsf{Win}}(X)$.
Superfluous fixpoint bindings which do not bind any free variables can then be removed.
The construction terminates as computations of $\phi\times X$ will eventually result in the variable $W_{\phi\times X}$ being bound.

\end{defn}

\begin{proof}\emph{(Theorem \ref{converse})}
For the correctness of this construction, it is sufficient to compare rule by rule the parity games on $(\mathfrak{T}\times \Psi)\times \mathsf{Win}$ and $\mathfrak{T}\times (\Psi \times \mathsf{Win})$.

{\bf Case $P\times E_m=\neg P$:} The position $((s,P), E_m)$ is winning for Even exactly when $(s,P)$ belongs to Even: when $s\models \neg P$. The position $(s,\neg P)$ is also winning for Even exactly when $s\models \neg P$.

{\bf Case $P\times O_m = P$:} The position $((s,P),O_m)$ is winning for Even exactly when $(s,P)$ belongs to Odd: when $s\models P$. The position $(s,P)$ is also winning for Even exactly when $s\models P$.

{\bf Case $\phi\times P$:} For $\phi\notin \mathcal{P}_{\mathsf{Win}}$, whether $(s,\phi)$ satisfies $P$ can only depend on $\phi$ and not $s$. If for all states $s$ of any tree, $(s,\phi)$ satisfies $P$, then
$((s,\phi),P)$ is winning for Even, as is $(s,\top)$. Otherwise $((s,\phi),P)$ is winning for Odd, as is $(s,\bot)$.

{\bf Case $\Box \phi \times \Box \psi = \Box (\phi \times \psi)$:}
$((s,\Box \phi),\Box \psi)$ and $((s,\Diamond \phi),\Box \psi)$ are both positions belonging
to Odd with successors $((s', \phi),\psi)$ for $s'$ a successor of $s$. On the other hand $(s,\Box (\phi,\psi))$ has successors $(s',\phi \times \psi)$ for $s'$ a successor of $s$.
The case for $\Box \phi \times \Diamond \phi= \Diamond \phi \times \Diamond \psi=\Diamond (\phi\times \psi)$ is similar.

{\bf Case $\phi \times \Box \psi= \bigwedge_{\phi'\in \mathit{im}(\phi)} \phi' \times \psi$:} The position $((s,\phi),\Box \psi)$ belongs to Odd and has successors $((s,\phi'),\psi)$ where $\phi'$ is an immediate subformula of $\phi$. On the other hand, $(s,  \bigwedge_{\phi'\in \mathit{im}(\phi)} \phi' \times \psi)$ also belongs to Odd and has successors $(s,\phi'\times \psi)$ where $\phi'$ is an immediate subformula of $\phi$.
The case for $\phi \times \Diamond \psi = \bigvee_{\phi'\in \mathit{im}(\phi)} \phi'\times \psi$ is similar.

{\bf Case for fixpoints:} $((s,\phi),X)$ has a unique successor $((s,\phi),\phi_X))$ where $\phi_X$ is the formula binding $X$ and is of the priority of $X$. On the other hand $(s,W_{\phi\times X})$ and $(s,\gamma W_{\phi\times X})$ for $\gamma\in \{\mu,\nu\}$ have a unique successor $(s,\phi\times \phi_X)$, and is of the priority of $W_{\phi \times X}$ which is the same as the priority of $X$.

{\bf Case for $\phi\times \psi \wedge \psi'$:} The position $((s,\phi),\psi \wedge \psi')$ belongs to Odd and has successors $((s,\phi),\psi)$ and $((s,\phi),\psi')$. The position $(s,(\phi \times \psi) \wedge (\phi \times \psi')$ also belongs to Odd and has successors $(s,\phi\times \psi)$ and $(s,\phi \times \psi')$.
The case for $(\phi \times \psi \vee \psi')$ is similar.\\

The formula $\Psi \times \mathsf{Win}$ inherits its priority assignment from $\mathsf{Win}$, and the condition on the introduction of fixpoint variables guarantees that the priority assignment is order-preserving.
Then we have that $\mathfrak{T}\models \Psi \Leftrightarrow \mathfrak{T}\times\Psi\models \mathsf{Win} \Leftrightarrow \mathfrak{T}\models \Psi \times \mathsf{Win}$: the formula
$\Psi\times\mathsf{Win}$, of index $I$, is
equivalent to $\Psi$.
\end{proof}

This concludes the argument that if
$\Psi$ can be interpreted by a formula $\Phi$ with index $I$, then $\Psi$ is semantically of index $I$.
Since $\Psi\times \Phi$ is an effective transformation that turns $\Psi$ into a formula which is syntactically of index $I$,
a formula can be simplified whenever a suitable interpreting formula is found. The following example shows how this can be used to argue that a formula is semantically simple without recourse to the formula's semantics.

\begin{example}\label{ex}
Consider this formula $$\Psi = \mu X. \nu Y. \mu Z. (A \wedge \Diamond Y) \vee (B \wedge \Diamond (Z\wedge \phi)) \vee (C \wedge \Box X)$$ where $\phi \in \Pi^{\mu}_2$ and does not have variables $X, Y, Z$. We use Theorem \ref{converse} to argue that this formula, with seemingly opaque semantics, is semantically $\Pi^{\mu}_2$.

The first thing to note is that a winning strategy $\sigma$ for Even in a model-checking game of $\Psi$ has the following property:
a play that agrees with $\sigma$ eventually reaches a point from which no play that agrees with $\sigma$ sees a position $s\times X$. We can describe the winning regions of such games with the following formula:
$$
 \mathsf{Interpretor} = \mu X. \bigvee_{i\in \{1,2,3\}} (E_i \wedge \Diamond X \vee O_i \wedge \Box X) \vee 
 \nu X_2. \mu X_1. \bigvee_{i\in \{2,1\}} (E_i \wedge \Diamond X_i \vee O_i \wedge \Box X_i)
$$

The intuition here is that the first part of the Interpretor formula describes Even's strategy until she can guarantee that a position of priority $3$ will not be seen again; the second part of the formula is just the usual $\mathsf{Parity}_{2,1}$ formula.
A strategy in the model-checking game for this formula translates directly into the model-checking game of $\mathsf{Parity}_{3,2,1}$. On the other hand, if Even has a winning strategy in a model-checking game $\mathfrak{T}\times \mathsf{Parity}_{3,2,1}$ with the above property, then she also has a winning strategy in $\mathfrak{T}\times \mathsf{Interpretor}$: once a play reaches the point from where her strategy no longer sees priority $3$, she moves to the second part of the formula.

The $\Pi^{\mu}_2$ formula $\mathsf{Interpretor}$ therefore interprets $\Psi$. From Theorem \ref{converse}, $\Psi$ is semantically $\Pi^{\mu}_2$. The mechanics of the transformation can come in the way of clarity, but for the curious reader the tidied-up version of the transformed formula follows. Note how the structure of $\mathsf{Interpretor}$ informs the structure of this new formula.
$$
 \mu X. (A \wedge \Diamond X) \vee (B \wedge \Diamond (X\wedge \phi)) \vee (C \wedge \Box X) \vee \\
  \nu Y. \mu Z. (A \wedge \Diamond Y) \vee (B \wedge \Diamond (Z\wedge \phi)) \vee (C \wedge \Box \bot)
$$
\end{example}

\section{Interpretation theorems}\label{second-part}

 So far, we have seen that if a formula $\Psi$ is interpreted by a formula of index $I$, then $\Psi$ is itself semantically of index $I$. This extension of the $L_\mu$-parity game relationship is natural and not too surprising. 
This section considers the converse: when is it the case that a formula semantically of index $I$ can be interpreted by a formula of syntactic index $I$? The formula in Example \ref{ex} shows that this may happen for  syntactic reasons. Here we discuss whether it holds in general. This conjecture can be studied with respect to any index $I$:

\begin{conjecture}\label{conj-converse}
 If $\Psi$ is semantically of index $I$, then there is a formula $\Phi$ of syntactic index $I$ such that for all structures $\mathfrak{T}$, $\mathfrak{T}\times \Psi \models \Phi$ if and only if
 $\mathfrak{T}\models \Psi$.
\end{conjecture}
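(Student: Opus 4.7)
The plan is to build on the partial results of Sections \ref{second-part} and \ref{sec-general-interpretation}, proceeding by a game-theoretic characterisation of semantic index. First I would reduce to the disjunctive case using Janin--Walukiewicz's translation, so that the one-play-per-branch property holds and Even's strategies acquire the positional rigidity exploited throughout the paper. The central goal is then to isolate a combinatorial restriction on Even's winning strategies in $G(\mathfrak{T}\times\Psi)$ which is (i) equivalent to $\Psi$ being semantically of index $I$, and (ii) expressible by an $L_\mu$ formula $\mathsf{Interpretor}_I$ of index $I$ over the encoded game arenas.

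The pattern to generalise is already visible in the cases handled: for $\mathit{ML}$, $\Pi^{\mu}_1$ and $\Sigma^{\mu}_2$, semantic membership is captured by winning strategies whose induced plays eventually avoid a certain range of priorities. For arbitrary $I=\{m,\ldots,q\}$, I would conjecture a ``phased'' characterisation: Even has a winning strategy each of whose plays can be segmented into a bounded sequence of phases in which only priorities of $\Psi$ compatible with an index-$I$ parity pattern occur cofinally. The interpretor would then be constructed in the spirit of the Interpretor of Example \ref{ex}, with $q-m+1$ nested fixpoints, each guessing the onset of the next phase and using $E_i,O_i,M$ to enforce the $\mathsf{Parity}_I$ condition on the reparameterised priorities. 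Once such an interpretor is in place, the direction from restricted strategy to equivalent index-$I$ formula is handed to us by Theorem \ref{converse} applied with $\mathsf{Win}=\mathsf{Interpretor}_I$.

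The hard part is the other direction: showing that if $\Psi$ is semantically of index $I$, witnessed by $\Psi'\equiv\Psi$ with $\Psi'$ syntactically of index $I$, then Even has a \emph{restricted} winning strategy in $G(\mathfrak{T}\times\Psi)$ matching the characterisation. The natural attempt is to transfer a positional winning strategy on $\mathfrak{T}\times\Psi'$ across the bisimulation-like relation induced by the equivalence $\Psi\equiv\Psi'$, translating each move of Even on $\Psi'$ into a sequence of moves on $\Psi$ whose priorities stay within the correct phase. Here the two games genuinely diverge: the fixpoint unfoldings of $\Psi$ and $\Psi'$ are not positionally aligned, and a play on $\Psi$ may visit high-priority positions even when the corresponding play on $\Psi'$ does not. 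Controlling this mismatch is precisely where the current techniques of the paper, which rely on strong structural properties specific to $\mathrm{ML}$, $\Pi^\mu_1$ and $\Sigma^\mu_2$, break down.

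The most promising route around this obstacle, hinted at in the introduction, is to recast the existence of a phased strategy as a boundedness problem, analogous to Colcombet's work on cost automata: measure, along each play, the ``cost'' of staying within a given phase, and show that the semantic index-$I$ hypothesis forces this cost to be uniformly bounded on every $\mathfrak{T}$. Combining such a uniform bound with a pumping argument on the regular tree $\mathfrak{T}\times\Psi$ should then yield the required restricted strategy, and hence the interpreting formula of index $I$. Absent such a bounding argument, the conjecture appears to require a genuinely new technique.
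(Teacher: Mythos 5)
This statement is not proved in the paper: it is stated explicitly as an open conjecture, and the paper only establishes it for $I\in\{\{\},\{0\},\{0,1\}\}$ (the last two under the additional hypothesis that $\Psi$ is disjunctive), plus a partial general form for co-disjunctive $\Psi$ and disjunctive target classes. Your proposal is accordingly a research plan rather than a proof, and to your credit its final sentence says so. Still, two of its steps would fail as written. First, the opening reduction ``to the disjunctive case'' via the Janin--Walukiewicz translation does not reduce anything: the conjecture concerns the model-checking games $\mathfrak{T}\times\Psi$ of the given formula $\Psi$, and replacing $\Psi$ by an equivalent disjunctive $\Psi^{d}$ changes the game arena. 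An interpretor for $\mathfrak{T}\times\Psi^{d}$ says nothing about $\mathfrak{T}\times\Psi$, which is precisely why the paper's $\Pi^{\mu}_1$ and $\Sigma^{\mu}_2$ theorems are stated only for disjunctive input, and why lifting that restriction is listed as open. (The translation to disjunctive form also need not preserve index, which matters if you intend to track $I$ through the reduction.)

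Second, the ``phased characterisation'' is not something you can posit and then discharge. The easy direction --- restricted strategy implies index-$I$ interpretor implies semantically index $I$ --- is indeed handled by Theorem \ref{converse}, and that is exactly how the paper uses its characteristic games ($n$-bounded games, satisfiability substitutions, $n$-challenge games). The entire content of the conjecture is the converse: that semantic membership in index $I$ forces the existence of such a restricted strategy on the games of the \emph{original} $\Psi$. You correctly identify that transferring a winning strategy from $\mathfrak{T}\times\Psi'$ to $\mathfrak{T}\times\Psi$ is where everything breaks, but the paper's own general result (Lemma \ref{general}) circumvents this only by requiring the input to be co-disjunctive (so that Odd's strategies see one play per branch, enabling the back-edge construction on $\mathfrak{T}$ that derives the contradiction) and the target class to be disjunctive; your sketch engages with neither requirement. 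The appeal to a Colcombet-style boundedness reduction is the paper's own suggestion for future work, not an argument. In sum, the proposal does not close the gap; it restates the open problem in the paper's own vocabulary.
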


If this conjecture is found to be true in general, then the descriptive complexity of the model-checking games of a $L_\mu$ formula $\Psi$ is exactly the complexity of the simplest formula equivalent to $\Psi$. 

Say that the class of formulas of index $I$ admits an \textit{interpretation theorem} if the above conjecture holds for $I$.
This section considers this conjecture for $\mathit{ML},\Pi^{\mu}_1$ and $\Sigma^{\mu}_2$. It shows that these classes admit interpretation theorems:
\begin{itemize}
 \item $\mathit{ML}$ has an interpretation theorem;
 \item $\Pi^{\mu}_1$ has an interpretation theorem for disjunctive $L_\mu$;
 \item $\Sigma^{\mu}_2$ has an interpretation theorem for disjunctive $L_\mu$;
 \end{itemize}
 
It then considers what can be said in the general case, and shows that
disjunctive $L_\mu$ alternation classes all have an interpretation theorem for input in co-disjuntive form.

In each case, the proof strategy relies on defining a characteristic game for each index $I$, with descriptive complexity $I$, and characterising formulas with semantic index $I$ as those whose model-checking parity games corresponds to the characteristic game.

\subsection{Interpretation theorem for $\mathit{ML}$}\label{sec-interpret-ml}

This section shows that conjecture \ref{conj-converse} holds for semantically modal formulas:

\begin{thm}\label{thm-eit-ml}\emph{(Interpretation Theorem for $\mathit{ML}$)}

 Let $\Psi$ be a semantically modal formula, then there exists (effectively) a modal formula $\Phi$ such that $\Phi$ interprets $\Psi$.
\end{thm}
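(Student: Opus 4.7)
The plan is to identify the characteristic parity game for $\mathit{ML}$ as one in which only bounded-length plays matter, and then to express winning in such a truncated game by a fixpoint-free formula whose modal depth is calibrated to the semantic modal depth of $\Psi$.

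Since $\Psi$ is semantically modal, fix an equivalent modal formula $\Phi_0$ of modal depth $d$; effectiveness of this step follows from Otto's decision procedure~\cite{otto1999eliminating}, which produces such a $\Phi_0$ when it exists. Because $\Phi_0$ inspects only the first $d$ levels of a tree, $\mathfrak{T}\models\Psi$ iff $\mathfrak{T}|_d\models\Psi$, where $\mathfrak{T}|_d$ denotes the truncation of $\mathfrak{T}$ at depth $d$. By the game semantics, Even therefore wins $\mathfrak{T}\times\Psi$ iff Even wins the truncated game $\mathfrak{T}|_d\times\Psi$. Crucially, in the latter guardedness forces every play to have length $O(d\cdot|\Psi|)$: each modal move of the game strictly increases the $\mathfrak{T}$-depth, and between consecutive modal moves no fixpoint variable can be revisited without an intervening modality, so at most $|\Psi|$ non-modal unfoldings occur between them.

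Next, I would encode this bounded game into a single modal formula $\Phi$ over the vocabulary $\{E_i,O_i,M\}$ in which $\mathfrak{T}\times\Psi$ is presented. Define a doubly-indexed family $f_k^n$, where $k\in\{0,\dots,d\}$ is a remaining \emph{modal budget} and $n\in\{0,\dots,N\}$ with $N=|\Psi|$ is a \emph{non-modal budget} until the next modal step. At each node, $f_k^n$ case-splits on the label: at literal terminals (detected by $\Box\bot$) it returns $\bigvee_i O_i$; at non-modal Even (resp.\ Odd) positions it unfolds to $\Diamond f_k^{n-1}$ (resp.\ $\Box f_k^{n-1}$); at $M$-labelled Even (resp.\ Odd) positions with $k>0$ it consumes a modal step via $\Diamond f_{k-1}^{N}$ (resp.\ $\Box f_{k-1}^{N}$); and at $M$-labelled positions with $k=0$ it returns $\bigvee_i O_i$, thereby simulating the truncation boundary (Even-owned diamond positions at depth $d$ are dead-ends losing for Even, Odd-owned box positions are dead-ends losing for Odd). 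The recursion terminates lexicographically in $(k,n)$, so $\Phi:=f_d^N$ unfolds to a genuine modal formula of modal depth $O(d\cdot|\Psi|)$. A straightforward induction then establishes that $\mathfrak{T}\times\Psi\models \Phi$ iff Even wins $\mathfrak{T}|_d\times\Psi$ iff $\mathfrak{T}\models\Psi$, which is the required interpretation.

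The main obstacle is ensuring the stratification faithfully mirrors the truncated game: the formula $\Phi$ unavoidably ``sees'' all of $\mathfrak{T}\times\Psi$, so the budget mechanism must cut off exploration at exactly the right boundary and must treat $\Box$- and $\Diamond$-positions asymmetrically at that boundary. The $k=0$ clause, which returns $\bigvee_i O_i$ at $M$-labelled positions, encodes precisely the ``Odd cannot move / Even cannot move'' behaviour at depth $d$ in $\mathfrak{T}|_d\times\Psi$, and this asymmetry is what makes the simulation correct despite $\Phi$ having no explicit notion of tree depth.
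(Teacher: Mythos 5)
Your proposal is correct and follows essentially the same route as the paper: a truncation lemma reducing $\Psi$ to its behaviour on $\mathfrak{T}|^d$, a bounded-game simulation in which the owner of a marked modal position at exhausted budget loses, and a doubly-indexed fixpoint-free formula (your $f^n_k$ is the paper's $\mathsf{Bounded}_{a,b}$, with the non-modal budget bounded via guardedness exactly as in the paper). The only cosmetic differences are your explicit appeal to Otto's procedure for effectiveness and your explicit terminal clause, which the paper handles implicitly.
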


The proof relies on the idea that modal formulas are local: a modal formula holds in a tree if and only if it holds in its initial fragment of depth $m$, the modal depth of the formula given by the maximal nesting depth of modalities. This is also true for formulas equivalent to modal formulas of modal depth $m$. The characteristic game for semantically modal formulas simulates truncating the tree at depth $m$.

\begin{lemma}
If $\Psi\in L_{\mu}$ is equivalent to a modal formula $\Phi$ of modal depth $m$, then for all $\mathcal{T}$, it is the case that $\mathcal{T}\models \Psi$ if and only if $\mathcal{T}|^m \models \Psi$ where $\mathcal{T}|^m$ is the initial fragment of $\mathcal{T}$, up to depth $m$.
\end{lemma}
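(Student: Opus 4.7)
The plan is to reduce the lemma to the well-known bounded-locality property of modal logic: a modal formula of depth $m$ depends only on the initial fragment of the tree up to depth $m$. Once this property is in hand, the equivalence $\Psi \equiv \Phi$ immediately transfers it from $\Phi$ to $\Psi$.

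First I would establish the following auxiliary claim by structural induction on modal formulas: for every $\Phi' \in \mathit{ML}$ of modal depth at most $k$ and every tree $\mathcal{T}$, $\mathcal{T} \models \Phi'$ if and only if $\mathcal{T}|^k \models \Phi'$. The base cases are literals, $\top$, and $\bot$, whose truth depends only on the labelling of the root; since $\mathcal{T}$ and $\mathcal{T}|^k$ share the same root with the same label, the equivalence is immediate. The inductive cases for $\wedge$ and $\vee$ are routine. For $\Diamond \phi'$ of modal depth $k+1$ (the case $\Box \phi'$ is dual), the successors of the root of $\mathcal{T}$ are in bijection with those of the root of $\mathcal{T}|^{k+1}$; moreover, for each successor $s$, the subtree of $\mathcal{T}$ rooted at $s$ truncated at depth $k$ is precisely the subtree of $\mathcal{T}|^{k+1}$ rooted at $s$. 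Since $\phi'$ has modal depth at most $k$, applying the induction hypothesis to each successor's subtree closes the argument.

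Given this claim, the lemma follows in one line: since $\Psi \equiv \Phi$ on all trees, in particular $\mathcal{T} \models \Psi$ iff $\mathcal{T} \models \Phi$ iff (by the claim, with $k = m$) $\mathcal{T}|^m \models \Phi$ iff $\mathcal{T}|^m \models \Psi$.

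I do not anticipate any serious obstacle: the mathematical content is entirely the classical fact that modal formulas have bounded look-ahead, combined with the hypothesis of semantic equivalence. The only mildly delicate point is being careful with the framing of the induction --- in particular, allowing $k$ to range over all upper bounds on the modal depth rather than fixing it to be exactly the modal depth --- so that the recursive call on subtrees at depth $k$ matches a valid bound for $\phi'$. An alternative phrasing, which might be slightly cleaner, is to show directly by induction on $k$ that any two trees which agree up to depth $k$ (in the sense of having isomorphic initial fragments with identical labellings) satisfy exactly the same modal formulas of modal depth at most $k$, and then instantiate this with $\mathcal{T}$ and $\mathcal{T}|^m$.
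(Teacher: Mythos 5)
Your proposal is correct and follows essentially the same route as the paper: establish that a modal formula of depth $m$ is determined by the depth-$m$ initial fragment, then transfer this to $\Psi$ via the semantic equivalence. The only cosmetic difference is that the paper justifies the locality step by observing that the model-checking game of $\Phi$ cannot reach positions more than $m$ steps from the root, whereas you prove the same classical fact by structural induction on the formula.
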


\begin{proof}
$\Phi$ only has $m$ nested modalities, so the model-checking game of $\Phi$ can not reach positions further than $m$ steps from the root. If two trees agree up to depth $m$, they therefore agree on $\Phi$. Since $\Psi$ is equivalent to $\Phi$, they also have to agree on $\Psi$.
\end{proof}

We now define a variation of a parity game which will allow us to simulate playing the model-checking game on a truncated structure $\mathcal{T}|^n$ rather than on $\mathcal{T}$.

\begin{defn}
A $n$-bounded parity game is a parity game, augmented with a counter, of which some positions $M\in V$ are marked. $M$ must be such that every infinite path goes through $M$. Then, a play of this game is a play of the parity game, except that the counter, initially at $n$, is decremented whenever a position in $M$ is seen. If the play reaches $M$ at counter value $0$, then the owner of the position loses the game.

If the longest path between two positions in $M$ is no longer than $p$, the winning regions of these games are described by the modal formula $\mathsf{Bounded}_{p,m}$ defined inductively as follows:
\begin{align*}
\mathsf{Bounded}_{0,b} = &\bot \\
\mathsf{Bounded}_{a,0} = &(E_i \wedge \neg M \wedge \Diamond \mathsf{Bounded}_{a-1, 0}) \vee  \\
 & (E_i \wedge  M \wedge \bot) \vee \\
 & (O_i \wedge \neg M \wedge \Box \mathsf{Bounded}_{a-1,0}) \vee \\
 & (O_i \wedge M \wedge \top) \\
\mathsf{Bounded}_{a,b} = & (E_i \wedge \neg M \wedge \Diamond \mathsf{Bounded}_{a-1, b}) \vee  \\
 & (E_i \wedge  M \wedge \Diamond \mathsf{Bounded}_{p,b-1}) \vee \\
& (O_i \wedge \neg M \wedge \Box \mathsf{Bounded}_{a-1,b}) \vee \\
& (O_i \wedge M \wedge \Box \mathsf{Bounded}_{p,b-1})
\end{align*}
\end{defn}

\begin{proof}(of Theorem \ref{thm-eit-ml})
Let $\Psi$ be equivalent to a modal formula of modal depth $m$. Let $p$ be the longest path in the parse-tree of $\Psi$ (where $\phi_X$ is taken to be the child of $X$) without modalities. Since $\Psi$ is taken to be guarded, $p$ is finite.
Then the model-checking games of $\Psi$ are $m$-bounded parity games where $M$ is the set of positions $s \times  \phi$ where $\phi$ is a modal formula: this game simulates the parity game $\mathcal{T}|^m \times \Psi$. To see this, it suffices to observe that the two games are identical until $m$ modalities are seen, after which in the $m$-bounded game positions $\Diamond \phi$ are winning for Odd, simulating reaching a position without successors at Even's turn, while $\Box \phi$ is winning for Even, simulating reaching a position without successors at Odd's turn.
 Then, noting that the longest path between two positions in $M$ in any model-checking parity game of $\Psi$ is of length at most $p$, $\mathsf{Bounded}_{p,m}$ interprets $\Psi$.
\end{proof}

Hence the modal fragment of $L_\mu$ has an effective interpretation theorem: any semantically modal formula can be interpreted by a syntactically modal formula. This means that all semantically modal formulas, no matter how high their syntactic index, generate a class of model-checking games
with modal descriptive complexity. The interpreting formula depends on the size of $\Psi$ as well as its semantic modal depth, but is computable from the formula ~\cite{lehtinen2015deciding}. 

\subsection{Interpretation theorem for $\Pi^{\mu}_1$}\label{sec-interpret-pi}

The conjecture \ref {conj-converse} also holds for \emph{disjunctive formulas} that are semantically $\Pi^{\mu}_1$. 

\begin{thm}\label{thm-eit-pi}\emph{(Interpretation theorem for $\Pi^{\mu}_1$)}

 Let $\Psi$ be a disjunctive formula which is semantically in $\Pi^{\mu}_1$. Then there exists (effectively) a $\Pi^{\mu}_1$ formula $\Phi$ such that $\Phi$ interprets $\Psi$.
\end{thm}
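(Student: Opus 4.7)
The plan is to follow the template of Section~\ref{sec-interpret-ml}: characterise disjunctive formulas semantically in $\Pi^{\mu}_1$ by a structural property of their model-checking games, define a variant of parity games that bakes this property in, exhibit a $\Pi^{\mu}_1$ formula describing the winning regions of this variant, and conclude that this formula interprets $\Psi$.

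First, I would prove a semantic lemma playing the role of the locality lemma for $\mathit{ML}$: if a disjunctive formula $\Psi$ is equivalent to a $\Pi^{\mu}_1$ formula $\Phi$, then for every tree $\mathfrak{T}$ Even has a winning strategy in $\mathfrak{T}\times\Psi$ along which every play visits positions of odd priority at most $k$ times, for some $k$ depending only on $\Psi$ and $\Phi$. The intuition is that $\Phi$ expresses a pure safety property (only $\nu$-fixpoints, hence only even priorities in its model-checking game), and disjunctive form ensures that Even's strategies commit to one play per branch; so a uniform branch-wise bound on $\mu$-unfolding can be extracted, since unbounded odd visits along some play would witness behaviour of $\Psi$ unmatched by any finite-memory safety strategy for $\Phi$.

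Second, I would define, by analogy with the $n$-bounded parity game of Section~\ref{sec-interpret-ml}, a \emph{$k$-bounded-odd parity game}: a parity game augmented with a counter, initialised to $k$, which is decremented each time a position of odd priority is visited, and in which Even loses immediately if the counter hits zero at an odd-priority position. After the counter is exhausted, Even must sustain the play indefinitely through even-priority positions only. The winning regions of such games are describable by a formula $\mathsf{BoundedOdd}_{k}$ built inductively: the base case uses $\mathsf{Parity}_{\{0\}}$ restricted to avoid all odd-priority markers, and each higher level wraps an additional $\mathsf{Parity}_{\{0\}}$-like layer which, on visiting an odd-priority position, falls back to the next inner level. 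Since only $\nu$-fixpoints are used, the resulting formula lies in $\Pi^{\mu}_1$.

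Third, combining the pieces: by the semantic lemma, the model-checking games of a semantically $\Pi^{\mu}_1$ disjunctive formula $\Psi$ are $k$-bounded-odd parity games for some $k$ computable from $\Psi$, and hence $\mathsf{BoundedOdd}_{k}$ interprets $\Psi$. The main obstacle is the semantic lemma of step one: while the modal case derives its bound directly from the modal depth of the target formula, here one must argue that semantic equivalence to a pure $\nu$-formula imposes a uniform, tree-independent bound on the number of odd-priority visits along winning plays of $\Psi$. This is where disjunctive form is essential and where the techniques of~\cite{lehtinen2015deciding} must be deployed; without the confluence of Even's strategy that disjunctive form guarantees, odd-priority excursions could in principle be unbounded across different branches even if bounded along each individual one.
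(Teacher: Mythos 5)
Your step-one lemma --- that semantic membership in $\Pi^{\mu}_1$ forces a uniform bound $k$ on the number of odd-priority positions visited along plays of some winning strategy --- is false, and the whole construction rests on it. Consider $\Psi = \mu X.\, P \vee \Diamond X \vee \nu Z.\Diamond Z$ (or its direct disjunctive rendering, which keeps the $\mu X$-loop intact). One checks that $\Psi$ is equivalent to $\nu X.\, P \vee \Diamond X$, hence semantically in $\Pi^{\mu}_1$. But on the finite chain $T_n$ of length $n$ carrying $P$ only at its leaf, the disjunct $\nu Z.\Diamond Z$ is everywhere false, so Even's only winning strategy in $T_n \times \Psi$ must unfold the $\mu$-variable $X$ roughly $n$ times before reaching the leaf; the number of odd-priority visits is therefore unbounded over the family $(T_n)_n$, and for any fixed $k$ your formula $\mathsf{BoundedOdd}_k$ would declare $T_{n}\times\Psi$ losing for $n>k$ even though $T_n\models\Psi$. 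The analogy with the modal case breaks precisely here: a modal target imposes a horizon on the \emph{game graph} (depth $m$), whereas a $\Pi^{\mu}_1$ target imposes no quantitative bound on $\mu$-unfoldings --- it only constrains \emph{which infinite behaviours matter}.

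The characterisation the paper actually uses, imported from~\cite{lehtinen2015deciding}, is of a different nature: in a disjunctive formula semantically in $\Pi^{\mu}_1$, every \emph{satisfiable} $\mu$-subformula may be replaced by the same subformula bound by $\nu$, and every unsatisfiable one by $\bot$. Consequently the winning condition of the model-checking game degenerates into a safety condition rather than a counting condition: Even wins any infinite play, of whatever dominant parity, provided she never enters a position corresponding to an unsatisfiable $\mu$-subformula. The interpreting formula is then a single greatest fixpoint over all priority markers, $\nu Y. (E_e \wedge \Diamond Y) \vee (E_o \wedge \Diamond Y) \vee (O_e \wedge \Box Y) \vee (O_o \wedge \Box Y)$, with the clause $E_i \wedge \Diamond Y$ replaced by $E_i \wedge \bot$ for the variables whose binding subformula is unsatisfiable. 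If you want to rescue your outline, the lemma to aim for is this satisfiability-based replacement property, not a bound on odd visits.
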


The decision procedure for $\Pi^{\mu}_1$ in~\cite{lehtinen2015deciding} shows that in disjunctive semantically $\Pi^{\mu}_1$
formulas, every satisfiable $\mu$-subformula can be replaced with the same subformula bound by $\nu$, while unsatisfiable ones can be replaced by $\bot$. This can be translated into
the formula describing the winning regions of the model-checking games of the formula, simply by substituting the least fixpoint in the parity game formula
corresponding to a subformula $\mu X.\phi$ with either $\bot$
or a greatest fixpoint, depending on whether the subformula is unsatisfiable.

For example, the appropriate formula describing the winning regions of a disjunctive formula in which all $\mu$-bound subformulas
are satisfiable would simply be:
$$\nu Y. (E_e \wedge \Diamond Y) \vee (E_o \wedge \Diamond Y) \vee (O_e \wedge \Box Y) \vee (O_o \wedge \Box Y) $$

Where $E_e$ ($E_o$) is the disjunction of $E_i$ for even (odd) $i$; $O_e$ ($O_o$) is the disjunction of $O_i$ for even (odd) $i$.
For $\mu$-bound subformulas which are unsatisfiable, it is enough to turn the clause $E_{i} \wedge \Diamond X_i$ corresponding
to the $\mu$-variable $X_i$ in question into $E_{i}\wedge \bot$.

\begin{remark}
 Note that the encoding of the model-checking game arena can incorporate data about the provenance of a node, rather than just its priority, for instance by using additional propositional variables $E_X$ to indicate positions which stem from the fixpoint variable $X$. To describe the winning regions of games encoded in this manner, we use $E_X \wedge \Diamond Y_{\Omega(X)}$ instead of $E_i\wedge \Diamond Y_i$.
\end{remark}

Recall that for modal formulas, the interpreting formula depends only on the modal rank of the formula and the length of the longest path  without fixpoints in the formula.
Since both of these can be bound in relation to the size of the formula, all semantically modal $L_\mu$ formulas of the same size can be interpreted by the same
formula. This cannot be said for semantically $\Pi^{\mu}_1$ formulas: the interpreting formulas depend on which fixpoint subformulas are unsatisfiable
and which are interchangeable with $\nu$. It remains an open question whether a \emph{uniform} interpreting formula could be devised for all semantically $\Pi^{\mu}_1$ formulas in
disjunctive form. It is also open whether the restriction to disjunctive formulas can be lifted.

\subsection{Interpretation theorem for $\Sigma^{\mu}_2$}\label{sec-interpret-alt-free}

The previous sections establish effective interpretation theorems for $\Pi^{\mu}_1$, restricted to disjunctive formulas, and for the modal fragment of $L_\mu$.
This section shifts the focus onto an alternation class not known to be decidable: $\Sigma^{\mu}_2$.

\begin{thm}\emph{(Interpretation Theorem for $\Sigma^{\mu}_2$)}\label{int-thm-sigma2}

 If a disjunctive formula $\Psi$ is semantically in $\Sigma^{\mu}_2$, then it is interpreted by a $\Sigma^{\mu}_2$ formula.
\end{thm}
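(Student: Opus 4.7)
The plan is to follow the three-step template of Theorems \ref{thm-eit-ml} and \ref{thm-eit-pi}: define a characteristic game for $\Sigma^{\mu}_2$, show its winning regions are uniformly expressible in $\Sigma^{\mu}_2$, and argue that for disjunctive formulas semantically in $\Sigma^{\mu}_2$ the parity winning condition on the model-checking arena coincides with this characteristic condition. Since $\Sigma^{\mu}_2$ corresponds to a $\mu\nu$-alternation over priorities $\{0,1\}$, the natural characteristic game is co-B\"uchi: Even wins iff a designated propositional marking $M$ is visited only finitely often. Its winning regions on an appropriately encoded arena are described by the uniform $\Sigma^{\mu}_2$ formula
$$\mathsf{CoBuchi} = \mu Y.\, \nu X.\, \bigvee_i (E_i \wedge M \wedge \Diamond Y) \vee (E_i \wedge \neg M \wedge \Diamond X) \vee (O_i \wedge M \wedge \Box Y) \vee (O_i \wedge \neg M \wedge \Box X),$$
where the disjunction ranges over the priority indices appearing in the arena encoding.

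For the central step I would invoke the game-theoretic characterisation of disjunctive semantically $\Sigma^{\mu}_2$ formulas from~\cite{lehtinen2017accepted}. That characterisation should identify a syntactically determined set $M_\Psi$ of subformula-occurrences of $\Psi$ — intuitively, those witnessing any alternation beyond $\Sigma^{\mu}_2$ — such that $\Psi$ is semantically in $\Sigma^{\mu}_2$ exactly when, in every model $\mathfrak{T}$, Even wins $\mathfrak{T}\times\Psi$ iff she has a winning strategy whose plays visit positions stemming from $M_\Psi$ only finitely often. Using the encoding remark following Theorem \ref{thm-eit-pi}, I would mark exactly these positions with $M$ in the arena encoding of $\mathfrak{T}\times\Psi$. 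The chain of equivalences $\mathfrak{T}\models\Psi \Leftrightarrow \text{Even wins }\mathfrak{T}\times\Psi \Leftrightarrow \text{Even wins the co-B\"uchi game on the marked arena} \Leftrightarrow \mathfrak{T}\times\Psi\models\mathsf{CoBuchi}$ then yields the theorem, with $\mathsf{CoBuchi}$ (over the $\Psi$-induced marking) as the interpreting $\Sigma^{\mu}_2$ formula.

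The main obstacle is extracting from~\cite{lehtinen2017accepted} a marking $M_\Psi$ that is \emph{local} to positions of the arena — depending only on which subformula a position encodes, not on the tree $\mathfrak{T}$ or the strategy — so that it can be recorded by a fixed propositional variable. The restriction to disjunctive $\Psi$ is essential here: because disjunctive model-checking games admit strategies that agree with only one play per branch (see the preliminaries), "avoiding $M_\Psi$" is a local property of the strategy tree rather than a combinatorial statement quantified over all strategy-consistent plays. Once such a marking is in place, the soundness direction reduces to the observation that a play avoiding $M$ from some point on is, in its tail, a play in a parity game of priorities $\{0,1\}$ and so is won by Even under the ordinary parity condition, while completeness is precisely the content of the characterisation. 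Verifying that $\mathsf{CoBuchi}$ really does describe the co-B\"uchi winning regions of the encoded arena is then routine and analogous to the derivation of $\mathsf{Parity}_I$.
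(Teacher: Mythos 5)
Your template is exactly the paper's: a characteristic game for $\Sigma^{\mu}_2$, a uniform $\Sigma^{\mu}_2$ formula for its winning regions, and the game-theoretic characterisation from~\cite{lehtinen2017accepted} to tie the characteristic condition to semantic membership in $\Sigma^{\mu}_2$. But the characteristic game you propose is not the one that characterisation supplies, and the discrepancy is not cosmetic. The result in~\cite{lehtinen2017accepted} says that a disjunctive $\Psi$ is semantically $\Sigma^{\mu}_2$ if and only if \emph{there exists a bound $n$} such that the $n$-challenge game on $\mathfrak{T}\times\Psi$ is equivalent to the parity game for all $\mathfrak{T}$. The $n$-challenge game is a dynamic, quantitative object: Odd opens challenges on even priorities in decreasing order, each challenge carries a counter initialised to $n$ and decremented on opening, challenges are met or reset as priorities are encountered, and the winning condition refers to whether opened challenges are eventually met. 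The interpreting formula is $\mathsf{Challenge}^n_I$, which depends on the (non-uniform, existentially quantified) parameter $n$. Your proposal replaces this with a static, syntactically determined marking $M_\Psi$ and a plain co-B\"uchi condition ``$M$ is seen finitely often,'' which is a qualitatively weaker kind of condition: it has no counters and no bound, and it is not the content of the cited characterisation.

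The gap is fatal rather than repairable by bookkeeping. If your key lemma held --- a fixed marking $M_\Psi$ computable from the syntax of $\Psi$ such that $\Psi$ is semantically $\Sigma^{\mu}_2$ exactly when $\mathsf{CoBuchi}$ (over that marking) interprets $\Psi$ --- then, combining it with Theorem~\ref{converse}, deciding semantic membership in $\Sigma^{\mu}_2$ for disjunctive $L_\mu$ would reduce to checking $\Psi \equiv \Psi\times\mathsf{CoBuchi}$, a decidable equivalence of two concrete formulas. But the paper is explicit that $\Sigma^{\mu}_2$ is only known to be decidable for $\Pi^{\mu}_2$ inputs, and that even a generalised interpretation theorem would only reduce the index problem to a \emph{boundedness} question. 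The parameter $n$, which your construction has no counterpart for, is precisely where that boundedness lives; the obstacle you flag at the end (extracting a local, tree-independent marking) is therefore not a technicality to be discharged but the point at which the argument breaks.
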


The proof argues that first part of the decidability proof of $\Sigma^{\mu}_2$ for $\Pi^{\mu}_2$ presented in~\cite{lehtinen2017accepted} translates into an interpretation theorem for $\Sigma^{\mu}_2$.
 Similarly to the modal case, a variation on parity games yields the model-checking games for semantically $\Sigma^{\mu}_2$ formulas. Their winning regions are described by a $\Sigma^{\mu}_2$ formula.
 Here we briefly recall the $n$-challenge game, and the result that its winning regions are described by a $\Sigma^{\mu}_2$ formula.

\begin{defn}
Fix a formula $\Psi$ in disjunctive form, of index $\{q,...,0\}$. Let $I = \{q,...,0\}$ if $q$ is even and $\{q+1,q,...,0\}$ otherwise.
Write $I_e$ for the even priorities in $I$.
 
The $n$-challenge game consists of a normal parity game augmented with a set of challenges, one for each even priority $i$. A challenge can
either be \emph{open} or \emph{met} and has a counter $c_i$ attached to it. 
Each counter is initialised to $n$, and decremented when the corresponding challenge is opened. The Odd player
can at any point open challenges of which the counter is non-zero, but he must do so in decreasing order: an $i$-challenge can only be opened if
every $j$-challenge for $j>i$ is opened. 
When a play encounters the priority $j$ while the $j$-challenge is open, the challenge is said to be met. All $i$-challenges for $i<j$ are then \emph{reset}. This means that
the counters $c_i$ are set back to $n$ and marked \emph{met}.

A play of this game is a play in a parity game, augmented with the challenge and counter configuration at each step. A play with dominant priority $d$ is winning for
Even if either $d$ is even or if every opened $d+1$ challenge is eventually met or reset.
\end{defn}

\begin{proof}\emph{(Theorem \ref{int-thm-sigma2})}
It is known from~\cite{lehtinen2017accepted} that disjunctive $\Psi$ is equivalent to a $\Sigma^{\mu}_2$ formula if an only if there is an $n$ such that for all trees $\mathfrak{T}$, Even wins the $n$-challenge game on $\mathfrak{T}\times \Psi$ if and only if she wins the parity game. The winning regions of the $n$-challenge games on parity game arenas of index $I$ are described by a $\Sigma^{\mu}_2$ formula $\mathsf{Challenge}_{I}^n$. Hence, if $\Psi$ is a disjunctive formula semantically in $\Sigma^{\mu}_2$, then there is an $n$ such that for all $\mathfrak{T}$, $\mathfrak{T}\models \Psi$ if and only if $\mathfrak{T}\times \Psi\models \mathsf{Challenge}_{I}^n$. That is to say, $\mathsf{Challenge}_{I}^n$ interprets $\Psi$.
\end{proof}

This concludes the argument that for formulas up to semantic index $\{1,0\}$, the descriptive complexity of the model-checking games coincides exactly with the semantic complexity of the formula. The dependence of these theorems on
disjunctive form highlights the importance of this normal form for the index problem.

\subsection{General interpretation theorem for disjunctive $L_\mu$}\label{sec-general-interpretation}

This section generalises the argument
used for $\Sigma^{\mu}_2$. With some concessions over the structure of target formulas, that is
to say only considering disjunctive alternation classes, this yields a general interpretation theorem for co-disjunctive formulas.\\

We generalise the $n$-challenge game by considering a set of challenges (one per target priority) per input priority rather than a single challenge.
Fix $J$ to be the \emph{input} index, \emph{i.e.} the index of the input formula, while $I$ is the \emph{target} index, \emph{i.e.} the
index of the target alternation class. Then, for the pair $J, I$, we defines a parameterised challenge game such that an input formula $\Psi$ in co-disjuntive form
of index $J$ is interpreted by the disjunctive formula (of index $I$) describing the winning regions of these games if and only if $\Psi$ is equivalent to a disjunctive formula of index $I$.

As usual, this game is played on a parity game arena with priorities from $J$. Unlike for $\Sigma^{\mu}_2$, the challenging player is now Even, to allow the challenger to use the fact that in disjunctive form she has strategies which see one play per branch.
The input formula on the other hand is in co-disjunctive form, to yield such strategies for Odd.

\begin{defn}\emph{(Generalised challenge games)}
A \emph{challenge configuration} $(\bar a,\bar c)$ consists of $|I_o|\times |J_o|$ challenges and counters, where $I_o$ and $J_o$ are the odd priorities in $I$ and $J$ respectively. 
 Write $a_{i,j} = \mathit{met}$ or $\mathit{open}$
for $i\in I$ and $j\in J$ to indicate whether the $i$-level challenge on $j$ is open. Each challenge $a_{i,j}$ is attached to a positive integer counter value $c_{i,j}$, bounded by $n$.

Given a configuration $(\bar a, \bar c)$, the least $j$ for which $a_{i,j}$ is open (for any $i$) is the \emph{priority} of the challenge configuration,
and the highest level $i$ at which $a_{i,j}$ is open is its \emph{level}.
A valid challenge configuration respects the following constraint:
if $a_{i,j}=\mathit{open}$ then $a_{i,k}=\mathit{open}$ for all $k>j$. That is to say, challenges are opened in decreasing order.

The \emph{game configuration} consists of a position in the parity game and a valid challenge configuration.
The progress of the game can be divided into two rounds: in the first round Even opens or resets challenges, while the second round is a step in the parity game.
In the first round her possible actions are:
\begin{itemize}
 \item To $k$-\emph{reset} for any odd $k\in I_o$, setting $a_{i,j}:=\mathit{met}$ and $c_{i,j}:=n$ for all $i\leq k$ and all $j\in J$;
 \item To open at any level $i$, challenges up to any $p$, as long as the counters allow it: for all $j\geq p$ such that $a_{i,j}=\mathit{met}$, set $a_{i,j}:=\mathit{open}$ and
 $c_{i,j}:=c_{i,j}-1$ if $c_{i,j}>0$.
\end{itemize}
Then, in the second round, the player whose turn it is in the parity game picks a successor position. If the underlying parity game ends in a terminal state, then the
winner of the underlying parity game immediately wins the challenge game, too.
The challenge configuration is updated according to the priority $p$ of this new position:
\begin{itemize}
 \item $a_{i,j}:=\mathit{met}$ for all $j\leq p$, all $i\in I_o$;
 \item $c_{i,j}:= n$ for all $j<p$, all $i\in I_o$.
\end{itemize}
If $c_{i,p}=0$ for some $i$, then the game ends immediately with a win for Odd.

A play is a potentially infinite sequence of game configurations: an underlying parity game play augmented with challenge configurations. The initial challenge-configuration is $a_{i,j}=met$  and $c_{i,j}=n$ for all $i,j$.
The dominant priority of the parity game play is also
the dominant priority of the challenge game play. An infinite play with dominant priority $d$ is winning for Odd if:
 $d$ is odd, or all $a_{i,d+1}$ challenges are in the $\mathit{met}$ state infinitely often. 
\end{defn}

\begin{lemma}\label{general-formula}
 The winning regions of a generalised $n$-challenge game for $J,I$ are described by a disjunctive $L_\mu$ formula with index $I$.
\end{lemma}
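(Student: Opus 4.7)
The plan is to construct the desired formula directly by internalising the challenge configuration into the fixpoint structure, exploiting that for fixed $n, I, J$ the configuration space $C$ of all valid $(\bar a, \bar c)$ is finite. I would introduce a fixpoint variable $X_c$ for each configuration $c \in C$, assigning priority $\Omega(X_c) \in I$ so that configurations in which Even has just opened challenges at level $i \in I_o$ receive the odd priority $i$, while configurations with no open challenge at a given level live inside $\nu$-variables at the appropriate even priority of $I$. The variables are then nested from largest to smallest priority, yielding an order-preserving priority assignment with co-domain exactly $I$. The body of each $X_c$ is a top-level disjunction encoding, on the one hand, each of Even's configuration-changing actions available from $c$ (open at some level $k \in I_o$ up to some threshold $p \in J$, or $k$-reset), each transition leading to $X_{c'}$ for the resulting $c'$; and, on the other hand, the underlying parity-game step, expressed by a single $\Diamond$ or $\Box$ modality --- selected by the propositional guards $E_j$, $O_j$, $M$ encoding position ownership --- leading to $X_{c''}$, where $c''$ reflects the update of the challenges after visiting a successor of $J$-priority $p$, including immediate win for Odd when some counter $c_{i,p}$ hits zero.

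Correctness is established by a bisimulation between the evaluation of this formula on an encoded arena and plays of the challenge game on that arena. The crucial observation is that the counter mechanism matches exactly the $\mu$-semantics at odd $I$-priorities: Even can open challenges at level $i$ only finitely many times without meeting them, mirroring finite unfolding of a $\mu$-fixpoint of priority $i$; conversely, her ability to stabilise with challenges perpetually met or reset at a given level matches the $\nu$-semantics at the neighbouring even priority. From these local correspondences I would derive both directions: a winning Even-strategy in the challenge game yields an Even-strategy in the model-checking game of the formula, and conversely, with the challenge-game winning condition on $J$-dominant priorities translating into the parity condition on $I$-priorities of the formula evaluation.

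Disjunctive form is essentially built in: Even's configuration moves and her $\Diamond$-choices sit at top-level disjunctions, and the $\Box$-branches for Odd follow the same pattern used by $\mathsf{Parity}_I$, which is already disjunctive, so that the Janin--Walukiewicz structural constraints are respected clause by clause. The main obstacle I expect is verifying that the priority assignment remains order-preserving across all configuration transitions --- especially $k$-resets, which flip every $a_{i,j}$ with $i \leq k$ to $\mathit{met}$ and can abruptly lower the effective target priority, and meet events triggered by low $J$-priorities, which can similarly cause large drops. I anticipate addressing this by refining the indexing of the $X_c$ with a little additional bookkeeping (for instance, the most recently-opened level), so that every transition going from a lower-priority block to a higher-priority variable occurs only at the latter's binding point and never as a back-edge crossing an intermediate block, thereby keeping the index equal to $I$ throughout.
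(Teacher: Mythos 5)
First, a caveat about the target of comparison: the paper does not actually prove this lemma. It only remarks that the construction of $\mathsf{GChallenge}^n_{J,I}$ is ``similar in spirit'' to the $\Sigma^{\mu}_2$ formula of the cited prior work, ``although slightly more involved, given the additional input priorities to account for.'' Your plan --- one fixpoint variable per challenge configuration, priorities from $I$ keyed to the level of the most recent opening, counters bounding the $\mu$-unfoldings, correctness by a step-by-step correspondence --- is the expected way to fill in that outline, so in approach you are aligned with what the paper intends rather than diverging from it.

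That said, the two places you wave at are exactly where the ``slightly more involved'' work lives, and as written the proposal does not close them. (1) The winning condition of the generalised challenge game is not locally an $I$-parity condition: Odd wins an infinite play iff the dominant $J$-priority $d$ is odd \emph{or} all $a_{i,d+1}$ challenges are met infinitely often --- a condition coupling the dominant priority of the underlying $J$-game with the challenge states at one specific level $d+1$. Your ``local correspondences'' (opening at level $i$ mirrors a $\mu$-unfolding, stabilisation mirrors $\nu$) do not by themselves establish that the maximal $I$-priority occurring infinitely often in your configuration-annotated run has the correct parity exactly when this coupled condition holds; you must argue explicitly, from the bounded counters, the replenishment rule on seeing a higher $J$-priority, and Even's resets, that the relevant infinitary behaviours collapse to the parity of your chosen $\Omega(X_c)$. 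Until that is done it is not even clear that your particular priority assignment (odd $i$ on just-opened configurations) is the right one. (2) The order-preservation problem you identify is genuine --- resets and meet events triggered by low $J$-priorities jump the configuration to a variable of much higher priority, which threatens to inflate the index beyond $I$ --- and your proposed repair (extra bookkeeping so upward jumps happen only at binding points) is stated as a hope, not carried out; since the entire content of the lemma is that the index is \emph{exactly} $I$, this cannot remain an anticipated difficulty. Finally, the claim that disjunctive form is ``essentially built in'' because $\mathsf{Parity}_I$ is ``already disjunctive'' is too quick: the conjunctions $O_i \wedge \Box X_i$ must be recast into the special conjunctions of Janin--Walukiewicz form, so the clause-by-clause verification needs to be performed, not asserted.
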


The construction of the formula $\mathsf{GChallenge}^n_{J,I}$ is similar in spirit to the one described in~\cite{lehtinen2017accepted}, although slightly more involved, given the additional input priorities to account for.

\begin{lemma}\label{general}
A co-disjunctive $L_\mu$ formula $\Psi$ of index $J$ is interpreted by the $I$-index formula $\mathsf{GChallenge}^n_{J,I}$ if it is equivalent to a disjunctive formula of index $I$.
\end{lemma}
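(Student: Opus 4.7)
The plan is to reduce the interpretation claim to a game-theoretic equivalence via Lemma \ref{general-formula}, which asserts that the winning region of the generalised $n$-challenge game on any arena of input index $J$ is defined by $\mathsf{GChallenge}^n_{J,I}$. Thus, showing that $\mathsf{GChallenge}^n_{J,I}$ interprets $\Psi$ amounts to finding an $n$ such that, for every tree $\mathfrak{T}$, Even wins the $n$-challenge game on $\mathfrak{T}\times\Psi$ if and only if she wins the underlying parity game. This is the direct analogue of the equivalence imported from \cite{lehtinen2017accepted} in the proof of Theorem \ref{int-thm-sigma2}, adapted to the co-disjunctive-input, disjunctive-target regime.

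The direction from challenge win to parity win holds unconditionally, for any $n$, and does not use the equivalence with $\Phi$. A winning strategy for Even in the challenge game, projected to the parity game by forgetting her first-round moves, is also parity-winning: along any compatible infinite play, Odd's winning condition fails, so in particular its first disjunct fails and the dominant $J$-priority is even; finite plays have the same winner in both games by definition. Hence $\mathfrak{T}\times\Psi\models\mathsf{GChallenge}^n_{J,I}$ implies $\mathfrak{T}\models\Psi$ for every $n$.

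For the converse I would exploit the assumed equivalence with a disjunctive $\Phi$ of index $I$. Let $\sigma_\Phi$ be Even's positional winning strategy in $\mathfrak{T}\times\Phi$ and $\sigma_\Psi$ a positional winning strategy in $\mathfrak{T}\times\Psi$. Since $\Phi$ is disjunctive, $\sigma_\Phi$ unfolds into a strategy tree in which each branch carries a single play visiting only priorities from $I$, with even dominant priority on every infinite branch. The idea is to use $\sigma_\Phi$ as an oracle that synchronises position-by-position with the $\Psi$-play on $\mathfrak{T}$: for the second-round moves Even follows $\sigma_\Psi$, while for the first-round moves she reads the current $I$-priority witnessed by $\sigma_\Phi$, performing a $k$-reset whenever an odd $I$-priority $k$ has just been superseded along the oracle by a larger even $I$-priority, and opening the matching level of pending challenges. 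Because $\sigma_\Phi$ visits even $I$-priorities cofinally on every branch, the decreasing-order constraint on openings can be honoured, and the requirement that all $a_{i,d+1}$ be met only finitely often is enforced by the fact that the dominant $I$-priority along the oracle is even.

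The main obstacle is establishing a uniform bound $n$ on the counters. The construction above opens each $(i,j)$-challenge at most once between two consecutive progress events in the oracle, so it suffices to bound the number of $\Psi$-game steps separating two such events. Since both $\sigma_\Phi$ and $\sigma_\Psi$ are positional and $\Psi$ and $\Phi$ are finite, the combined strategy has a finite-state description, and a progress-measure or compactness argument over the finitely many joint configurations yields such a bound uniformly in $\mathfrak{T}$. Making this precise in the presence of the $|I_o|\times|J_o|$ interacting challenge levels, and checking that the induced first-round actions are always well-defined and respect the counter-expiry constraint, is the principal technical work of the proof.
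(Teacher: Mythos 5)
Your first direction (a challenge-game win projects to a parity-game win, for any $n$) is correct and, as you say, independent of the equivalence with $\Phi$. The gap is in the converse, and it sits exactly where you locate it: the uniformity of $n$ over all trees $\mathfrak{T}$. Your proposed fix --- ``a progress-measure or compactness argument over the finitely many joint configurations'' --- does not work, because the joint configuration space of $\sigma_\Phi$, $\sigma_\Psi$ and the challenge state lives on $\mathfrak{T}\times\Phi$ and $\mathfrak{T}\times\Psi$, and therefore grows with $\mathfrak{T}$; positionality of the strategies gives a bound depending on (a representation of) $\mathfrak{T}$, not a single $n$ valid for every tree. Bounding the number of times a given challenge $a_{i,j}$ can be re-opened before being met or reset is precisely the hard content of the lemma, and no local argument along a single play of a fixed tree yields it.

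The paper proves the converse by contraposition rather than by direct construction. One assumes that for every $m$ there is a tree $\mathfrak{T}$ on which Even wins the parity game $\mathfrak{T}\times\Psi$ but Odd wins the generalised $m$-challenge game. For a sufficiently large $m$ (large relative to $\Phi$ and $\Psi$, not to $\mathfrak{T}$), Even's winning strategy $\sigma$ in $\mathfrak{T}\times\Phi$ is used, much as in your oracle construction, to define a challenging strategy $\gamma$ --- even $I$-priorities along $\sigma$ trigger openings at the corresponding level, odd ones trigger resets. But the decisive step is what happens next: Odd's winning response $\tau$ to $\gamma$ in the $m$-challenge game is used to add back edges to $\mathfrak{T}$, producing a structure $\mathfrak{T}'$ on which $\sigma$ remains winning for $\Phi$ while $\tau$ becomes winning for Odd in the parity game $\mathfrak{T}'\times\Psi$, contradicting $\Phi\equiv\Psi$. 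This pumping step is where co-disjunctivity of $\Psi$ (one Odd-play per branch, so that $\tau$ can be folded into back edges of the structure) and disjunctivity of $\Phi$ (one Even-play per branch, replacing K\"onig's lemma in deciding when a priority has been ``seen'' and a challenge should be opened) are actually used; your proposal invokes the disjunctivity of $\Phi$ only informally and the co-disjunctivity of $\Psi$ not at all. Without this counterexample-to-back-edges argument the direct construction cannot be closed.
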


Again, the proof follows the same structure as the proof of the $\Sigma^{\mu}_2$ case. In brief, it assumes that $\Psi$ of index $J$ is equivalent to some disjunctive $\Phi$ of index $I$ and that for all $m$, there is a structure $\mathfrak{T}$,
such that Even wins the parity game $\mathfrak{T}\times \Psi$
but Odd wins the generalised $m$-challenge game on $\mathfrak{T}\times \Psi$.  It then takes a sufficiently large $m$, and uses a winning strategy $\sigma$ for Even in $\mathfrak{T}\times \Phi$ to define a challenging strategy $\gamma$
for her in the generalised $m$-challenge game on $\mathfrak{T}\times \Psi$. In the $\Sigma^{\mu}_2$ case the challenges were issues when the higher priority had been seen on all plays since the last challenge. This time, different even priorities cause Even to issue challenges at different levels while different odd priorities cause Even to reset at different levels. Then Odd's winning strategy $\tau$ in the $m$-challenge game on $\mathfrak{T}\times \Psi$ is used to add back edges to $\mathfrak{T}$, turning it into a new
structure $\mathfrak{T}'$ which preserves Even's winning strategy $\sigma$
in $\mathfrak{T}'\times \Phi$ while turning $\tau_\gamma$ into a winning strategy in $\mathfrak{T}'\times \Psi$. This contradicts the equivalence of $\Phi$ and $\Psi$. The main technical difference is that we no longer invoke K\"onig's lemma, and instead uses the fact that target formulas are in disjunctive form to find a suitable challenging-strategy for Even.\\

This gives us a general interpretation theorem for all disjunctive $\Sigma^{\mu}$ alternation classes, for co-disjunctive input formulas: the complexity of an equivalent disjunctive formula is an upper bound on the descriptive complexity of the model-checking games generated by the co-disjunctive form of $\Psi$.

\subsubsection*{Comparison with automata-theoretic results}

The readers familiar with the automata-theoretic efforts to tackle the Rabin--Mostowski index problem may find some parallels between this last interpretation theorem and the reduction of the decidability of the  Rabin--Mostowski hierarchy of non-deterministic automata to a boundedness question~\cite{colcombet2008non}. 
In both cases, the core construction gives a way of mapping input priorities to the target priorities. In~\cite{colcombet2008non}, it is a question of guessing mappings between input and target priorities a bounded number of times. Here it is achieved through the parameterised challenges. The proof in~\cite{colcombet2008non} relies on \emph{guidable} automata -- a concept for which the $L_\mu$-theoretic conterpart is conspicuously absent.

I believe this comparison to be a cause for cautious optimism when looking for further interpretation theorems: indeed, the automata-theoretic result is not burdened by the same constraints to disjunctive and co-disjuntive form as Theorem \ref{general}; perhaps these constraints can be lifted in the $L_\mu$ framework and the mapping construction in~\cite{colcombet2008non} could yield a family of interpreting formulas, for which the interpretation theorem does not place restrictions on the syntax of the input formula.

The transferability of the techniques is however not trivial. Disjunctive $L_\mu$ was designed to be the $L_\mu$ counter-part of non-deterministic automata. However, although the model-checking problems for non-deterministic automata and disjunctive $L_\mu$  reduce to each other by encoding unranked trees as ranked trees, this reduction is not necessarily index-preserving. Furthermore, only a small fragment of $L_\mu$, namely disjunctive $L_\mu$ \emph{restricted to existential modalities} seems to allow a construction akin to guidable automata. Although on this restricted fragment the index problem seems similar to the non-deterministic index problem, it seems unlikely that it would suffice to decide the index problem for disjunctive $L_\mu$. Overall, how the disjunctive $L_\mu$ index problem and the non-deterministic parity automata index problem relate to each other is not yet settled.

\section{Discussion}

We have shown that for formulas equivalent to formulas of index $I\in \{\{\},\{0\},\{0,1\}\}$, the descriptive complexity of their model-checking games coincides exactly with $I$. Beyond these alternation classes, the descriptive complexity of the model-checking games of $\Psi$ is an upper bound on the semantic complexity of $\Psi$. Furthermore, for a co-disjunctive formula equivalent to a disjunctive formula of index $I$, the descriptive complexity of its model-checking games is bounded by $I$.

This links the index-problem to the parity game--$L_\mu$ relationship, which is shown to be even more robust than previously accounted for. As a result, the index-problem can be approached by studying the descriptive complexity of model-checking games. On one hand, for practical purposes formulas can be simplified by looking at the model-checking games they generate. On the other hand, interpretation theorems are a stepping stone on the pursuit of further decision procedures.

Besides finding a truly general interpretation theorem which would show that the descriptive complexity of model-checking games coincides exactly with the semantic complexity of formulas, I leave the reader with a couple of related open questions.

 For low alternation classes $\mathit{ML},\Pi^{\mu}_1$ and $\Sigma^{\mu}_2$, some of the existing decision procedures can be recast as interpretation theorems. For both $\Pi^{\mu}_1$ and $\Sigma^{\mu}_2$, the interpretation theorem only applies for disjunctive input formula.
This once again highlights the importance of disjunctive form for the index problem: the descriptive complexity of the model-checking games
of disjunctive formulas appears to be lower than for general $L_\mu$ formulas. Whether this is truly the case, remains an open question.

The generality of the last interpretation theorem comes at a cost.  In
particular, it is not constructive, in the sense that if a co-disjunctive formula is indeed equivalent to a disjunctive formula of index $I$,
even if we find the interpreting disjunctive formula of index $I$, this only allows us to construct a formula of index $I$,
which may not itself be disjunctive.
 I conjecture that a similar theorem could be shown for disjunctive input formulas. As discussed in the previous section, some automata-theoretic techniques~\cite{colcombet2008non} could perhaps be used to address some of the limits of the current techniques.

\bibliographystyle{eptcs}
\bibliography{gandalf}

\end{document}